\documentclass[10pt, conference, compsocconf]{IEEEtran}

\usepackage[cmex10]{amsmath}
\usepackage{graphicx}
\usepackage{longtable}
\usepackage{amssymb}
\usepackage{amsthm}
\usepackage{amsmath}
\usepackage{bbm}
\usepackage{dsfont}
\usepackage{epstopdf}
\usepackage{listings}
\usepackage{color}
\usepackage{arydshln}
\usepackage{pgf}
\usepackage{tikz}
\usepackage[ruled,vlined,linesnumbered]{algorithm2e}
\usetikzlibrary{arrows,shapes,snakes,automata,backgrounds,petri}
\usepackage[latin1]{inputenc}
\usepackage{wrapfig}
\usepackage{url}
\usepackage{floatrow}
\usepackage{caption}
\usepackage{subcaption}
\usepackage{enumitem}
\usepackage{gensymb}
\usepackage{times}
\setlist{nolistsep}

\usepackage[colorlinks,
            linkcolor=black,
            anchorcolor=black,
            citecolor=blue,
			urlcolor=black,
            bookmarks=true
            ]{hyperref}



\newcommand\sqra{\mathrel{]\kern-1.0ex\rightarrow}}
\newcommand\sqle{\mathrel{-\kern-0.5ex[}}
\newcommand\sqmd{\mathrel{]\kern-0.9ex-\kern-0.95ex[}}
\newcommand\sqml{\mathrel{]\kern-0.5ex-}}
\newcommand\sqmr{\mathrel{-\kern-0.5ex[}}
\newcommand\sqla{\mathrel{\leftarrow\kern-1.1ex[}}
\newcommand\sqre{\mathrel{]\kern-0.5ex-}}

\newcommand\bgal{\scriptsize\begin{align*}}
\newcommand\edal{\end{align*}\normalsize}

\newtheorem{definition}{Definition}
\newtheorem{theorem}{Theorem}
\newtheorem{lemma}{Lemma}
\newtheorem{example}{Example}

\begin{document}

\title{Stateful Security Protocol Verification}

\author{
\IEEEauthorblockN{
Li Li\IEEEauthorrefmark{1},
Jun Pang\IEEEauthorrefmark{2},
Yang Liu\IEEEauthorrefmark{3},
Jun Sun\IEEEauthorrefmark{4},
Jin Song Dong\IEEEauthorrefmark{1}
} \\
\IEEEauthorblockA{
\IEEEauthorrefmark{1}School of Computing, National University of Singapore, Singapore\\
\IEEEauthorrefmark{2}FSTC and SnT, University of Luxembourg, Luxembourg \\
\IEEEauthorrefmark{3}School of Computer Engineering, Nanyang Technological University, Singapore\\
\IEEEauthorrefmark{4}Information System Technology and Design, Singapore University of Technology and Design, Singapore
}}
\maketitle

%

\begin{abstract}
A long-standing research problem in security protocol design is how to 
efficiently verify security protocols with tamper-resistant global states. 
In this paper, we address this problem by first proposing a protocol specification framework, 
which explicitly represents protocol execution states and state transformations. 
Secondly, we develop an algorithm for verifying security properties 
by utilizing the key ingredients of the first-order reasoning for reachability analysis, 
while tracking state transformation and checking the validity of newly generated states. 
Our verification algorithm is proven to be (partially) correct, if it terminates. 
We have implemented the proposed framework and verification algorithms in a tool named SSPA, 
and evaluate it using a number of stateful security protocols. 
The experimental results show that our approach is not only feasible but also practically efficient. 
In particular, we have found a security flaw on the digital envelope protocol, 
which could not be detected by existing security protocol verifiers.

\end{abstract}


\section{Introduction} 
\label{sec:introduction}

Many widely used security protocols, e.g.,~\cite{GM99,BW00,MR08,AR10}, 
keep track of the protocol execution states. 
These protocols maintain a global state among several sessions 
and can behave differently according to the values stored in the global state. 
More importantly, the protocol's global state is tamper-resistant, 
i.e., it cannot be simply cloned, faked, or reverted. 
As the result, we cannot treat the global state as an input from the environment 
so that the protocol becomes stateless. 
In practice, such global states are usually extracted from trusted parties in protocols like 
central trustworthy databases, trusted platform modules (TPMs), etc. 


The global state poses new challenges for the existing verification techniques as discussed below. 
First, most existing verification tools, e.g., ProVerif~\cite{Bla01} and Scyther~\cite{Cre08}, 
are designed for stateless protocols. 
When they are used to verify stateful protocols, 
false alarms may be introduced in the verification results. 
For instance, when the protocol state is ignored in these tools, 
a value generated in a later global state can be used in a former global state. 
However, the execution trace is actually impractical. 
Second, stateful protocols usually have sub-processes that can be executed for infinitely many times. 
However, the state-of-the-art tools, e.g.,~\cite{Bla01,Cre08,ARR11}, cannot handle loops. 
As a consequence, only a finite number of protocol execution steps can be modeled and checked. 
Therefore, valid attacks could be missed in the verification. 
Even though some tools like Tamarin~\cite{MSCB13} can specify loops, 
the verification cannot terminate for most stateful protocols 
as they do not consider the states as tamper-resistant in the multiset rewriting rules~\cite{DLM04}. 
Third, some of the abstractions made in the existing works tend to either 
make the verification non-terminating for stateful protocols or introduce false alarms. 
For instance, fresh nonces generated in ProVerif~\cite{Bla01} are treated 
as functions to the preceded behaviors in a session 
so that the nonces with the same name could be merged under the same execution trace. 
On one hand, if a stateful protocol receives some data before generating any nonce in its session, 
the nonce generated in one session can be received 
before the same nonce is generated in a different session. 
According to the abstraction method, the nonce becomes a function applied to itself, 
which could lead to infinite function applications. 
Thus the verification cannot terminate. 
On the other hand, if a nonce is generated without performing any session-specific behavior, 
then the nonce will be the same for multiple sessions. 
The query of asking whether the nonce for a particular session can be deduced may give false alarms, 
because the nonce that can be deduced is actually coming from another session. 
As these nonces are merged, they cannot be differentiated in the verification process. 

To address the above identified challenges for verifying stateful security protocols,
we first propose a protocol specification framework (see Section~\ref{sec:specification})
that explicitly models the protocol execution state as tamper-resistant. 
We specify how states are used in the protocol as well as how states are transferred. 
As a result, stateful protocols can be modeled in our framework in an intuitive way. 
The protocol specification is introduced with a motivating example 
of the digital envelope protocol~\cite{AR10}. 
Second, a solving algorithm is developed to verify stateful protocols. 
During solving, we apply a pre-order to the states and converge the states into a valid state trace. 
The secrecy property checked in this work is then formulated into a reachability problem. 
The partial correctness of our method is formally 
defined in Section~\ref{sub:query} and proved in Section~\ref{sec:verification}. 
However, as the security protocol verification problem is undecidable in general~\cite{MSDL99}, 
our algorithm does not guarantee the termination. 
The experiments show that our method can terminate for  
many stateful security protocols used in the real world. 
Third, we develop a tool named SSPA (Stateful Security Protocol Analyzer) based on our approach. 
Several stateful protocols including the digital envelop protocol
and the Bitlocker protocol~\cite{BL11} have been analyzed using SSPA. 
The experiment results show that our method can both find security flaws and give proofs efficiently. 
Particularly, we have found a security flaw in the digital envelope protocol which has not been identified before. 

\medskip\noindent
{\bf Structure of the paper.}
Related works are discussed in Section~\ref{sec:related} and 
a motivating example is given in Section~\ref{sec:example}. 
In Section~\ref{sec:specification}, we present our protocol specification framework
and describe how to specify cryptographic primitives, protocols and queries. 
In Section~\ref{sec:verification}, we show how the verification algorithm works 
and prove its partial correctness. 
We show the implementation details and the experiment results in Section~\ref{sec:experiments}. 
Finally, we conclude the paper with some discussions in Section~\ref{sec:discussions}. 


\section{Related Works} 
\label{sec:related}

M{\"o}dersheim developed a verification framework that works with global states~\cite{Mod10}. 
His framework extends the IF language with sets 
and abstracts the names based on its Set-Membership. 
According to~\cite{Mod10}, this method works well for several protocols. 
However, its applicability in general is unclear
since sets should be explicitly identified for the protocols 
and no general solution for identifying the set is given in the paper. 
Guttman extended the strand space with mutable states 
to deal with stateful protocols~\cite{Gut09,Gut12}, but there is no tool support for his approach. 
Our approach presented in this paper is different from theirs, as the protocol specification 
does not need to be changed in our framework and we provide automatic tool support. 

StatVerif is introduced by Arapinis et al.~\cite{ARR11} to verify protocol with explicit states. 
It extends the process calculus of ProVerif~\cite{Bla01} with stateful operational semantics 
and translates the resulting model into Horn clauses. 
ProVerif is then used as an engine to perform verification. 
Comparing with their method that can only work with a finite number of (global) states, 
our approach is more general and works for protocols with infinite states. 

In~\cite{DKRS11}, Delaune et al.\ modeled TPMs with Horn clauses and have verified three protocols using ProVerif. 
However, the specifications need to be adapted according to the different protocols under study. 
For instance, an additional parameter is added into the global state 
when it is used for the digital envelope protocol (DEP)~\cite{AR10} to prevent false attacks. 
More importantly, they also modified the specification of the DEP
in a way that false negatives can happen (attacks are missing) 
comparing with the original DEP proposed in~\cite{AR10}. 
This is because their method does not work for infinite steps of the stateful protocols. 
Specifically, they have constrained the protocol so that its second phase is not repeatable. 
More discussions on the DEP can be found in Section~\ref{sec:example}. 
Notice that all of the previous methods can only work with protocols with finite steps. 
while this is not the case with our approach. 


\section{Motivating Example} 
\label{sec:example}

\begin{figure}[t]
	\begin{center}
	\includegraphics[scale=0.48]{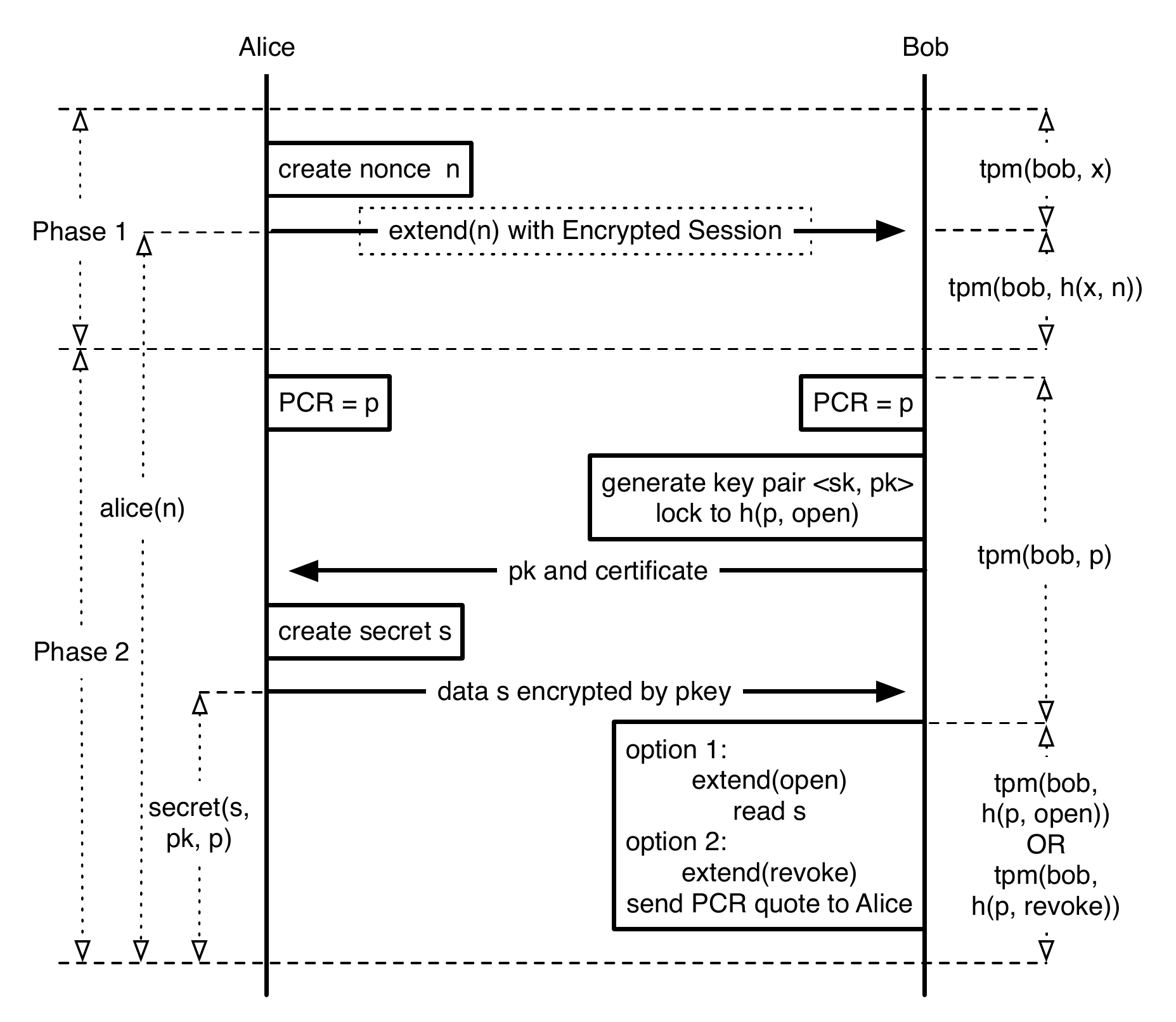}
	\caption{The digital envelope protocol (DEP)}
	\label{fig:envelope}
	\end{center}
\end{figure}

We introduce the digital envelope protocol (DEP)~\cite{AR10} in this section as a motivating example. 
Before going into the details of the protocol, 
we give a brief introduction on the trusted platform module (TPM)~\cite{TPM} used in the protocol first. 

TPM is an embedded cryptographic device proposed to give higher level security guarantees 
than those can be offered by software alone. 
Every TPM has several tamper-resistant platform configuration registers (PCRs) 
that maintain the current state of the TPM. 
The values stored in the PCRs can only be \emph{extended}. 
One possible implementation of extending a PCR $p$ with a value $n$ could be 
$\mathit{extend}(n)\{ p = \mathit{h}(p, n) \}$,
where $\mathit{h}$ is a one-way hash function applied to the concatenation of $p$ and $n$. 
Hence, the extending actions are irreversible 
unless the TPM reboot is allowed (the PCRs are reset to the default value $b$) 
and the previous extending actions are replayed in an identical order. 
TPM provides several APIs to help the key management, 
including key generation, key usage, etc., under PCR measurement. 

TPMs use several types of keys, including 
the attestation identity keys (AIKs) and the warp keys. 
The AIK represents the identity of the TPM in the protocol and can be used for signing. 
In order to differentiate the TPMs, we assume every TPM has a unique AIK. 
However, this assumption does not prevent the adversary from using multiple AIK values 
as he could initiate multiple TPMs. 
The warp keys form a tree structure rooted under the permanent loaded storage root key (SRK). 
We usually use two kinds of warp keys in the TMP, i.e., the binding keys and the storage keys. 
Data can be encrypted with the binding public key remotely, 
or can be sealed with the loaded storage key in the TPM. 
Typically, the TPM supports the following operations. 
\begin{itemize}
    \item \emph{Extend}. Extend the PCR value $p$ by any value $n$ to a new PCR value $h(p, n)$. 
    \item \emph{Read}. Read the current PCR value from the TPM. 
    \item \emph{Quote}. Certify the current PCR value. 
    \item \emph{CreateWrapKey}. Generate a warp key under a loaded parent key 
    and bind it to a specific PCR value. 
    The new key is not yet loaded into the TPM but stored in a key blob, 
    which is a storage place for holding the key. 
    \item \emph{LoadKey2}. Load the key into TPM by providing the key blob and its parent key. 
    \item \emph{CertifyKey}. Certify a loaded key. 
    \item \emph{UnBind}. Decrypt the data with a loaded binding key. 
    The PCR value for the key should be matched. 
    \item \emph{Seal}. Encrypt the data with a storage key. 
    The PCR value for the key should be matched 
    and the encrypted data can be sealed to a particular PCR value. 
    \item \emph{UnSeal}. Decrypt the data with the loaded storage key. 
    The PCR value of the seal key, the PCR value of the sealed storage 
    and the current PCR value are required to be the same. 
\end{itemize}
As the storage key and seal/unseal operation are not used in the DEP, 
we omit their specification in the following discussions. 

By using TPMs, the DEP allows an agent \emph{Alice} 
to provide a digital secret $s$ to another agent \emph{Bob} 
in a way that \emph{Bob} can either access $s$ without any further help from \emph{Alice}, 
or revoke his right to access the secret $s$ so that he can prove his revocation. 
This protocol consists of two phases as shown in Figure~\ref{fig:envelope}. 
In the first phase, \emph{Alice} generates a secret nonce $n$ 
and uses it to extend a given PCR in \emph{Bob}'s TPM with an encrypted session. 
The transport session is then closed. 
Since the nonce $n$ is secret, \emph{Bob} cannot re-enter the current state of the TPM 
if he makes any changes to the given PCR. 
In the second phase, \emph{Alice} and \emph{Bob} read the value of the given PCR as $p$ 
and \emph{Bob} creates a binding key pair $\langle sk, pk \rangle$ locked to the PCR value $\mathit{h(p, open)}$ 
and sends the key certification to \emph{Alice}, 
where $\mathit{open}$ is an agreed constant in the protocol. 
This means the generated binding key can be used 
only if the value $\mathit{open}$ is first extended to the PCR of value $p$.
After checking the correctness of the certification, 
\emph{Alice} encrypts the data $s$ with her public key $pk$ and sends it back to \emph{Bob}. 
Later, \emph{Bob} can either open the digital envelope by extending the PCR with $\mathit{open}$ 
or revoke his right to open the envelope by extending another pre-agreed constant $\mathit{revoke}$. 
If \emph{Bob} revokes his right, the quote of PCR value $\mathit{h(p, revoke)}$ 
can be used to prove \emph{Bob}'s revoke action. 
The protocol is illustrated in Figure~\ref{fig:envelope}.

\begin{figure}[t]
    \centering
        \includegraphics[scale=0.45]{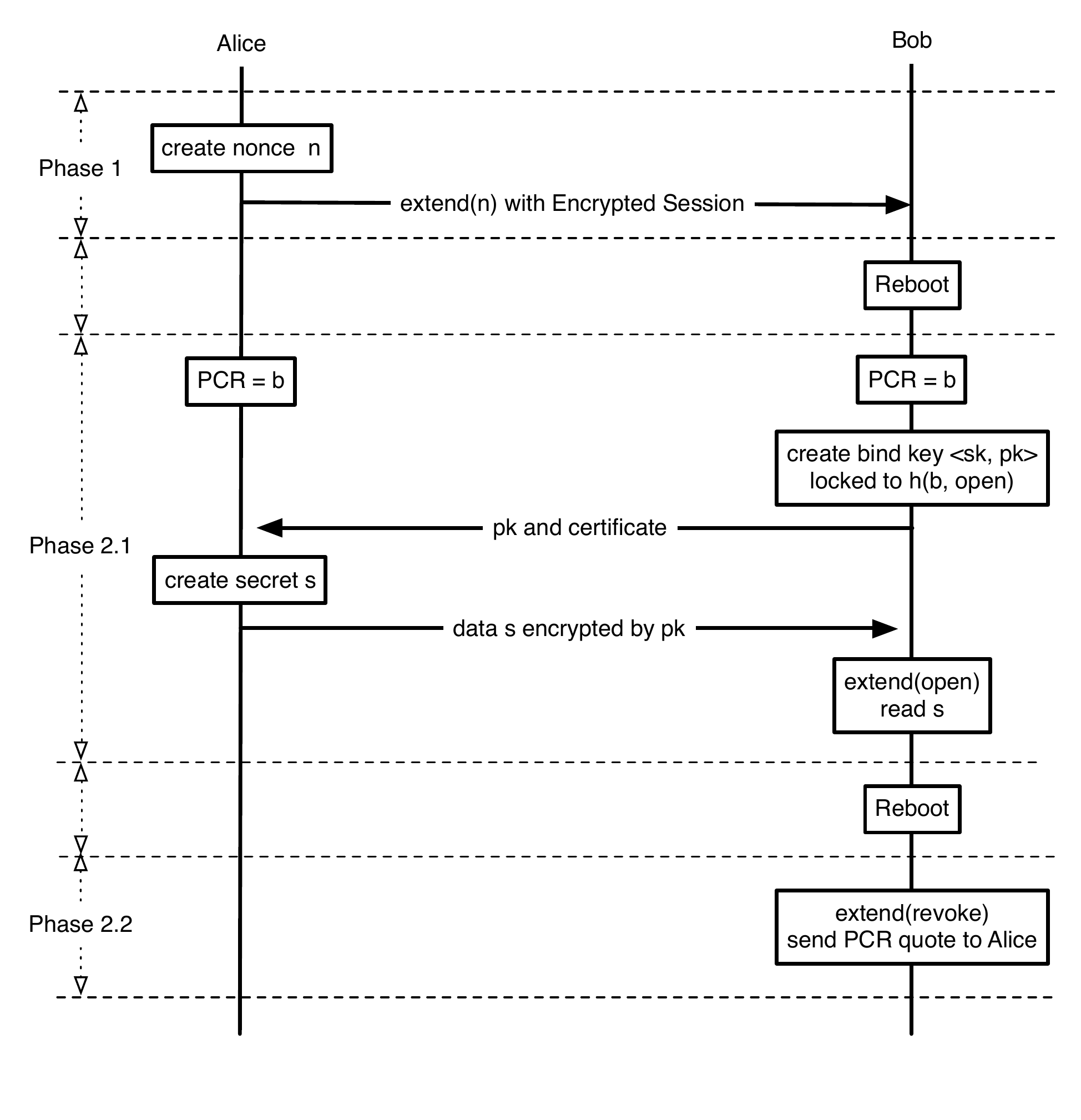}
    \caption{An attack on the DEP}
    \label{fig:attack}
\end{figure}

In fact, through our approach and the implemented tool,
we have found a cold-boot attack for this DEP 
when the TPM reboot is allowed. 
According to the DEP proposed in~\cite{AR10}, 
the authors only mentioned that \emph{Bob} may lose his ability to open the envelope 
or to prove his revoke action if the TPM reboot is allowed. 
To the best of our knowledge, this attack has not been described before.
We present the attack scenario in Figure~\ref{fig:attack}. 
When the TPM reboot is allowed, \emph{Bob} can reboot his TPM 
immediately after the first phase. 
As a consequence, the secret nonce $n$ extended to the given PCR is lost. 
When \emph{Alice} checks the PCR value in the beginning of the second phase, 
she actually reads a PCR value that is unrelated to her previous extend action. 
Hence, \emph{Bob} can re-enter the current TPM state by simply performing TPM reboot again. 
This attack is caused by the fact that 
the PCR value in the second phase can be unrelated to the PCR value in the first phase. 
On the other hand, if the TPM reboot is not allowed, 
the secret nonce $n$ could never get lost. 
So \emph{Alice} can conduct the second protocol phase for multiple times 
and the claimed properties of the DEP are always preserved. 
In this way, if the TPM is maintained by a trusted server and remotely controlled by 
both \emph{Alice} and \emph{Bob} without the right to reboot TPM, this protocol is secure. 

This protocol was previously verified in~\cite{DKRS11}. 
However, the modifications made in~\cite{DKRS11} to the original DEP prevent the authors from detecting the attack. 
In the modified version~\cite{DKRS11}, \emph{Bob} always does the TPM reboot before the first phase 
and \emph{Alice} assumes that the PCR is $h(b, n)$ 
without actually reading the value in the beginning of the second phase. 
As a result, TPM reboot can never happen before the second phase. 
The reason why they need to make such modifications is because  
ProVerif, which is used in their verification, can only model finite protocol steps. 
Unfortunately, this makes it impossible to find the attack as described in Figure~\ref{fig:attack}. 
On the contrary, in this work, we provide a framework 
where protocols like this can be modeled faithfully and verified automatically. 


\section{Protocol Specification} 
\label{sec:specification}

In this section, we describe our specification framework 
for modeling (stateful) protocols, crypto primitives and queries 
as a set of first order logic rules with the protocol execution states explicitly maintained. 
There are two categories of rules that can be specified in our approach, 
i.e., state consistent rules and state transferring rules. 
The state consistent rules specify the knowledge deductions, 
while the state transferring rules describe the state transitions. 
Since the protocol global state is tamper-resistant, 
we assume that it can only be changed by the state transition rules. 
The adversary model we consider in this work is the standard active attacker, 
who can intercept all communications, compute new messages and send any messages that he can obtain or compute. 
For instance, he can use all the public available functions including encryptions, decryptions and etc. 
He can also ask the legitimate protocol participants to take part in the protocol. 
That is, every rule specified in the framework describes a logic capability of the adversary. 
Our goal is to check whether he can deduce a target fact or not. 

\subsection{Framework Overview} 
\label{sub:overview}

In our framework, every entity and device in the protocol 
is treated as an object when it is tamper-resistant. 
Every object have an object global state with a unique identity. 
The protocol global state then consists of several object global states. 
For simplicity, we name object global state after state for short, 
and call protocol global state as protocol state in the remaining of this paper. 
For instance, every TPM has a state $\mathit{tpm(aik, p)}$ 
which records the AIK value $\mathit{aik}$ and the PCR value $p$. 
The AIK value uniquely identifies the TPM. 
Initially, the protocol state of DEP is $\{ \mathit{tpm(bob, p)} \}$, 
where $\mathit{bob}$ stands for the AIK constant for \emph{Bob}'s TPM. 
After the first phase of the DEP, \emph{Alice} enters a state $\mathit{alice(n)}$ 
where $n$ is the secret value that she extends to Bob's TPM. 
When \emph{Alice} obtains the state, she can initiate the second phase of the protocol. 
Because \emph{Alice} could start several sessions to \emph{Bob}'s TPM, 
we treat $n$ as the identity of \emph{Alice}'s state. 
When \emph{Alice} extends the nonce $n$ to \emph{Bob}'s TPM, 
the protocol state becomes $\{ \mathit{tpm(bob, h(p, n))}, \mathit{alice(n)} \}$. 
A protocol state can contain several TPM states with different AIK values. 

The states of the same object should be ordered in a timeline of protocol execution, forming a state trace. 
For instance, the following sequence of four states is a legitimate TPM state trace in the DEP. \vspace{2mm}
\begin{enumerate}
    \item $\mathit{tpm(bob, i)}$
    \item $\mathit{tpm(bob, h(i, n))}$
    \item $\mathit{tpm(bob, h(h(i, n), x))}$
    \item $\mathit{tpm(bob, h(h(h(i, n), x), revoke))}$
\end{enumerate}
\vspace{2mm}\noindent
The first state is the initial state. 
Then, in the first phase of the DEP, 
Alice extends a secret nonce $n$ into \emph{Bob}'s TPM (the second state). 
Later, \emph{Bob} extends a value $x$ into his TPM for other purposes (the third state) 
and the second phase of the DEP begins. 
At the beginning, \emph{Alice} and \emph{Bob} record the PCR value as $h(h(i, n), x)$. 
When \emph{Bob} receives \emph{Alice}'s sealed secret, 
\emph{Bob} extends the pre-agreed constant $\mathit{revoke}$ 
to revoke his right of opening the envelope (the fourth state). 
In most protocols, one state can be used for multiple times. 
For instance, in the above example, \emph{Bob} needs to use the third TPM state for several times 
to generate key, load key, generate certifications and etc. 
As these states are actually the same, 
we need to identify them as one state when they are used in different places. 
On the other hand, the first state used in the protocol is precedent to the third state. 
Thus, we should also identify how states are updated, 
namely the transformation an old state to a new state. 

The protocol rules specified in our framework are of the form $H : M \sqle S : O \sqra V$. 
$H$ is a set of premises such as 
the terms that the adversary should know and the events that the protocol should engage. 
$S$ is a set of states. 
Both of $H$ and $S$ must be satisfied so that the rule is applicable. 
For example, when the adversary wants to load a key into the TPM, 
the adversary should know its parent key and obtain the TPM state with matched PCR value. 
$V$ is the conclusion of the rule with two types of values. 
One type of conclusion is a fact. 
Take the TPM loading key as an example, its conclusion is a fact that 
the adversary can get the loaded key in the TPM. 
The other type of conclusion represents 
how the states are transferred from old ones to new ones. 
As the states in our framework are attached to the objects, 
the conclusion consists of pairs of old state and new state for the same object, 
denoting that state is converted from one to another. 
In TPM extending operation, the conclusion is one pair of states 
$\langle tpm(aik, p), tpm(aik, h(p, n)) \rangle$ 
in which the PCR value in the second state is extended. 
$M$ and $O$ help us to organize the correspondences between facts and states. 
$M$ maps the facts to the states indicating that the facts should be known at which states. 
$O$ is the orderings of the states generated from the knowledge deduction. 
For instance, when a fact $f$ required by a rule $R$ can be provided as the conclusion of another rule $R'$, 
we can compose these two rules together to remove the requirement of $f$. 
Since the $f$ is provided by $R'$ and used in $R$, 
the states mapped by $f$ in $R$ are required later than requirement of the states in $R'$. 
The orderings are specified in the verification process 
to make sure that the state trace is practical for the protocol. 
We name the rule as \emph{state consistent rule} when $V$ is a fact 
and call the rule as \emph{state transferring rule} when $V$ is a set of state conversions. 

In addition, we use \emph{events} and \emph{states} to distinguish the protocol sessions. 
The \emph{events} are engaged in the rule predicates to indicate the generation of fresh nonces. 
Since fresh nonces are random numbers, we assume their values can uniquely identify the events. 
Whenever the nonces generated in different events have the same value, 
these events should be merged. 
On the other hand, the \emph{states} are used to describe the objects or entities presented in the protocol. 
Basically, we use states to differentiate the different phases of the objects. 
As we do not bound the number of \emph{events} and \emph{states}, 
the verification is conducted for an infinite number of sessions. 


\subsection{Term Syntax} 
\label{sub:term}

\begin{table}[t]
\small
	\begin{center}
    \begin{tabular}{ | l | l r | }
        \hline \hline 
        Type				& Expression  & \\
        \hline 
		Data($x$) 			& $*\mathit{n}$ & (key name)\\
							& $\mathit{n}$ & (message name)\\
        \hline 
		Declaration($D$)	& $s(x_1, x_2, \ldots, x_n)$ & (state type)\\
							& $e(x_1, x_2, \ldots, x_n)$ & (event type)\\
		\hline 
        Term($t$)			& $f(t_1, t_2, \ldots, t_n)$ & (function) \\
							& $a[]$ & (name) \\
							& $[n]$ & (nonce) \\
							& $|g|$ & (configuration) \\
							& $v$ & (variable) \\
		\hline
        State($s$)  		& $s(t_1, t_2, \ldots, t_n)$ & (state) \\ 
        \hline 
        Fact($f$)      	    & $k(t)$ & (knowledge) \\
                        	& $e(t_1, t_2, \ldots, t_2)$ & (event) \\
        \hline 
		Conversion($c$)	    & $\langle s, s' \rangle$ & (state conversion) \\
        \hline 
        Rule($R$)			& \multicolumn{2}{l|}{$f_1, f_2, \ldots, f_n : M 
                                \sqle s_1, s_2, \ldots, s_m : O \sqra$} \\
                            &   $f$ & (state consistent rule) \\
							& \multicolumn{2}{l|}{$f_1, f_2, \ldots, f_n : M 
                                \sqle s_1, s_2, \ldots, s_m : O \sqra$} \\ 
							&   $c_1, c_2, \ldots, c_k$ & (state transferring rule) \\
		\hline
        Accessibility($A$) & $s(t_1, t_2, \ldots, t_n)$ & (state instance) \\
		\hline
    \end{tabular}
	\end{center}
    \caption{Rule syntax hierarchy}
    \label{tab:hierarchy}
\end{table}

We adopt the syntax in Table~\ref{tab:hierarchy} to model the protocols. 
Before using an event or a state in the rules, 
we need to declare it with a unique identity. 
For the nonce generation event, the pair of the event name and the fresh nonce is 
the key\footnote{Note that it is different from a cryptographic key.} and 
we can merge two events if they have the same key. 
While for states, the pair of the object name and the object identity is the key and 
states with the same key should be ordered, describing certain phases of the same object. 

Rules are used to specify the protocol execution and adversary capabilities. 
They have the hierarchy structure as follows. 
\emph{Terms} could be defined as \emph{functions}, \emph{names}, 
\emph{nonces}, \emph{configurations} or \emph{variables}. 
\emph{Functions} can be applied to a sequence of \emph{terms}; 
\emph{names} are globally shared constants; 
\emph{nonces} are freshly generated values in the sessions; 
\emph{configurations} are values pre-existed in the states; 
and \emph{variables} are memory locations for holding the \emph{terms}. 

\emph{States} describe object stages in the protocol by maintaining a set of terms. 
If two states $s$ and $s'$ have the same key, 
they are describing the same object, denoted as $s \sim s'$. 
The operator $\sim$ is an equivalence relation that can partition a state set into several disjoint subsets. 
When a mutable value is encoded in the \emph{state}, we name it as \emph{configuration}. 
It is different from variables because its value is decided by the environment, 
while the value of a variable is decided by the assignment to the variable. 
In other words, \emph{configuration} is pre-existed while \emph{variable} is post-assigned. 

A \emph{fact} can be the engagement of an \emph{event}, 
or it means that a term $t$ is known to the adversary denoted as $k(t)$. 
We define \emph{mapping} as a pair of \emph{fact} $f$ and \emph{state} $s$ 
denoted by $\langle f, s \rangle$, representing that $f$ is true at state $s$. 
Additionally, we define state \emph{ordering} by applying the binary operator $\le$ over state pairs: 
$s \le s'$, i.e., $s$ should be a state used no later than $s'$. 
The state set is a preorder set over $\le$, 
and each $\sim$ partition is a partially ordered set over $\le$. 
The derivation of mappings and orderings are discussed in Section~\ref{sec:verification}. 

A \emph{conversion} $c$ is a pair of states $\langle s, s' \rangle$ 
which stands for the transformation from an old state $s$ to a new state $s'$. 
We call $s$ as the \emph{pre-state} of $c$ denoted as $\mathit{pre}(c)$ 
and name $s'$ as the \emph{post-state} of $c$ denoted as $\mathit{post}(c)$. 
For a set of \emph{conversions} $C$, we have $\mathit{pre}(C) = \{ \mathit{pre}(c) | c \in C \}$ 
and $\mathit{post}(C) = \{ \mathit{post}(c) | c \in C \}$. 
There are two kinds of \emph{rules} that can be specified 
in our framework as shown in the Table~\ref{tab:hierarchy}. 
The \emph{state consistent rule} means if $f_1, f_2, \ldots, f_n$ are true 
under the protocol state $s_1, s_2, \ldots, s_m$ satisfying the mappings $M$ and the orderings $O$, 
$f$ is also true under the same state. 
For the \emph{state transferring rule}, it means if $f_1, f_2, \ldots, f_n$ are true 
under the protocol state $\{ s_1, s_2, \ldots, s_m \} \cup pre(C)$ satisfying the mappings $M$ and the orderings $O$, 
the protocol state can be transferred into $\{ s_1, s_2, \ldots, s_m \} \cup post(C)$ where $C = c_1, c_2, \ldots, c_k$. 

Assume $H$ is a fact set, $S$ and $S'$ are two state sets, 
we define $H \times S = \{ \langle f, s \rangle | f \in H, s \in S \}$ 
and $S \times S' = \{ s \le s' | s \in S, s' \in S' \}$. 
Given a rule $H : M \sqle S : O \sqra V$ directly specified from the protocol, 
the predicates $H$ should be given at the exact states and all the states should be presented at the same time. 
So the default value of $M$ is $H \times S$,  
and the default value of $O$ is $S \times S$. 
In the remaining of the paper, we omit them in the protocol specification. 



\subsection{Rule Modeling} 
\label{sub:modeling}

In the following, we illustrate how to specify stateful protocols in our approach
by using the DEP described in Section~\ref{sec:example} as a running example. 
In the following protocol specification, we assume that 
both of the first phase and the second phase could be conducted for infinitely many times. 
We assume that all of the values extended to \emph{Bob}'s TPM in the first phase 
and the secrets bound to the public key in the second phase are freshly generated nonces. 
So we can differentiate the sessions and values used in the sessions during the verification. 

In order to clearly illustrate the modeling strategy employed in our approach, 
we describe the basic functionalities of the TPM along with the rules. 
Notice that our approach is not limited to the applications of TPM,
but potentially other stateful security protocols. \vspace{2mm}

\subsubsection{Declarations} 
\label{ssub:declaration}

Before specifying the protocol, 
we need to declare the events and the states that are used in the rules and queries. 

There are three nonce generation events in the DEP. 
The $\mathit{genkey(*sk, aik, p, pcr)}$ event models that a new binding key $sk$ is generated in the TPM. 
In addition to the fresh key $sk$, 
the $\mathit{genkey}$ event also specifies the AIK value $aik$ 
and the PCR value $p$ of the TPM when the key is generated. 
Moreover, the $\mathit{pcr}$ in the $\mathit{genkey}$ event models the PCR value that $sk$ is bound to. 
The $\mathit{init(*n, p)}$ event is emitted 
when \emph{Alice} extends the nonce $n$ to \emph{Bob}'s TPM of the PCR value $p$. 
The $\mathit{gensrt(*s, p, pkey)}$ event is engaged 
when \emph{Alice} creates the secret $s$ for a new session of the second phase 
after receiving a key certification of $pkey$ issued from \emph{Bob}'s TPM with the PCR value $p$. 

In terms of the protocol states, 
\emph{Alice} enters the state $\mathit{alice(*n)}$ 
after she extends the secret nonce $n$ to \emph{Bob}'s TPM. 
\emph{Alice} also maintains the state $\mathit{secret(*s, p, pkey)}$ 
when she decides to share the secret value $s$ over \emph{Bob}'s TPM with the PCR value $p$. 
The $pkey$ is a public key generated from \emph{Bob}'s TPM, locked to PCR $h(p, open[])$. 
Beside, every TPM has a state of $\mathit{tpm(*aik, p)}$ 
in which the TPM is identified by the AIK value $aik$ and it has the PCR value $p$. \vspace{2mm}


\subsubsection{State Consistent Rules} 
\label{ssub:state_preserved_rules}

The rules in the first category preserves the protocol execution state. 
However, they can be applied only if the protocol is in some specific states. 
Most of the rules related to the TPM fall into this category. 

\smallskip\noindent
{\bf Stateless Rules.} 
Some stateless operations are allowed in stateful protocols 
such as encryption, decryption, concatenation and etc. 
For instance, public key generation and the binding operation of the TPM can be modeled as 
\normalsize
\begin{eqnarray}
&\mathit{k(skey)} \sqle~\sqra \mathit{k(pk(skey))} \label{rule:pk}  \\
&\mathit{k(mess)}, \mathit{k(pkey)} \sqle~\sqra \mathit{k(aenc(mess, pkey))} \label{rule:aenc} 
\end{eqnarray}
\normalsize
where the state set is empty in these rules. 
Rule (\ref{rule:pk}) means that if the adversary knows a term $\mathit{skey}$, 
he could treat it as a private key and compute its corresponding public key $pk(\mathit{skey})$. 
Rule (\ref{rule:aenc}) models the binding operation happened outside of the TPM, 
which means if the adversary knows a message $\mathit{mess}$ and a binding public key $\mathit{pkey}$, 
he could encrypt $\mathit{mess}$ by $\mathit{pkey}$ 
and get the asymmetric encryption $\mathit{aenc}(\mathit{mess}, \mathit{pkey})$. 
As stateless protocols can be considered a special case of stateful protocols, 
our verification framework also works for stateless protocols. 
Other two stateless rules in the DEP model the fact that 
the agreed constant values $\mathit{open}$ and $\mathit{revoke}$ are known publicly. 
\normalsize
\begin{eqnarray}
&\sqle~\sqra \mathit{k(revoke[])} \label{rule:revoke} \\
&\sqle~\sqra \mathit{k(open[])} \label{rule:open}
\end{eqnarray}
\normalsize

\smallskip\noindent
{\bf Data Fetch Rules.} 
Another category of the state consistent rules contains the data fetch rules. 
They model the fact that some data used in the protocol can be fetched 
directly from the protocol state without other information. 
In the DEP, the adversary has control over the TPM.  
First of all, he can use the the storage root key (SRK) to encrypt any messages. 
In addition, he can ask the TPM for its PCR value and its PCR quote without providing any information.  
To specify a general case of the TPM, 
the AIK value is not fixed to \emph{Bob}'s TPM. 
\normalsize
\begin{eqnarray}
& \sqle \mathit{tpm(|aik|, |p|)} \sqra k(\mathit{srk(|aik|)}) \label{rule:srk} \\
& \sqle \mathit{tpm(|aik|, |p|)} \sqra k(\mathit{|p|}) \label{rule:pcr} \\
& \sqle \mathit{tpm(|aik|, |p|)} \sqra k(\mathit{pcrcert(|aik|, |p|)}) \label{rule:pcrcert}
\end{eqnarray}
\normalsize
As $\mathit{srk(|aik|)}$ represents the SRK itself rather than its value,  
rule (\ref{rule:srk}) means that the adversary has access to the SRK. 
Rule (\ref{rule:pcr}) and (\ref{rule:pcrcert}) stand for 
getting the PCR value and the PCR quote, respectively. 
PCR quote is a certification issued from the TPM 
that can be used to prove its PCR value. 

\smallskip\noindent
{\bf Data Processing Rules.}
The third category of the state consistent rules contains data processing rules, 
which process data based on the presented information and the protocol state. 
As we have illustrated in Section~\ref{sec:example}, 
the keys used in the TPM are well protected and strictly controlled. 
In the TPM, keys can only be generated under a parent key, 
and the generated key can be bound to a specific PCR value so that 
it can be used only if the given PCR is of that value. 
In the DEP, for the sake of simplicity, 
we assume all the new keys are generated from the SRK. 
Additionally, all the new keys are bound to a specific PCR value 
as it is the case for the  protocol. 
Notice that our technique does not restrict us from 
specifying the complete TPM. 
\normalsize
\begin{align}
&k(\mathit{pcr}), k(\mathit{srk(|aik|)}), \mathit{genkey([sk], |aik|, |p|, pcr)} \notag \\ 
	&~~\sqle \mathit{tpm(|aik|, |p|)} \sqra \notag \\
    &~~k(\mathit{\langle pk([sk]), blob(|aik|, [sk], srk(|aik|), pcr) \rangle}) \label{rule:blob} \\
&k(\mathit{blob(|aik|, sk, pakey, pcr)}), k(\mathit{pakey}) \notag \\
	&~~\sqle \mathit{tpm(|aik|, |p|)} \sqra 
    k(\mathit{pcrkey(|aik|, sk, pcr)}) \label{rule:pcrkey} 
\end{align}
\begin{align}
&k(\mathit{pcrkey(|aik|, sk, pcr)}) \notag \\ 
	&~~\sqle \mathit{tpm(|aik|, |p|)} \sqra 
    k(\mathit{keycert(|aik|, pk(sk), pcr)}) \label{rule:keycert} \\
&k(\mathit{aenc(data, pk(sk))}), k(\mathit{pcrkey(|aik|, sk, |p|)}) \notag \\
	&~~\sqle \mathit{tpm(|aik|, |p|)} \sqra 
    k(\mathit{data}) \label{rule:unbind} 
\end{align}
\normalsize
Rule (\ref{rule:blob}) specifies that a new session key $sk$ can be generated 
in the TPM identified by $aik$ with PCR value $p$. 
In addition, the new key is bound to the PCR value $\mathit{pcr}$ 
so that it can only be used when the PCR is of that value. 
As can be seen from rule (\ref{rule:blob}), 
we need to specify the target PCR value for the key and provide the SRK as well. 
In addition, all of the related information should be encoded into the key generation event 
so that it can be used to identify the key generation behavior. 
Initially, the generated key is not loaded into the TPM but stored in a key blob. 
So rule (\ref{rule:pcrkey}) models the key loading operation by providing the key blob and its parent key. 
When the key is loaded, the TPM can issue key certification as illustrated in rule (\ref{rule:keycert}). 
Rule (\ref{rule:unbind}) describes the bound data can be decrypted with the corresponding loaded key. 
More importantly, the PCR value specified in the key should be matched with the current PCR. 

When \emph{Alice} receives key certification from \emph{Bob} and she has already finished the first phase, 
she generates a secret $[s]$, encrypts it with the public key $pkey$ and sends it to \emph{Bob}. 
\normalsize
\begin{align}
    \mathit{gensrt([s], p, pkey)}, &\mathit{k(keycert(bob[], pkey, h(p, open[])))} \notag \\
    	\sqle \mathit{alice(|n|)} &\sqra k(\mathit{aenc([s], |pkey|)}) \label{rule:secret}
\end{align}
\normalsize\vspace{1mm}


\subsubsection{State Transferring Rules} 
\label{ssub:state_transition_rules}

The state transferring rules change the protocol's global state. 
The PCR value extending action is modeled as follows. 
\normalsize
\begin{align}
	k(n) \sqle~\sqra \langle \mathit{tpm(|aik|, |p|)}, \mathit{tpm(|aik|, h(|p|, n))} \rangle \label{rule:extend}
\end{align}
\normalsize
Rule (\ref{rule:extend}) means that if the adversary knows a value $n$, 
he could extend the given PCR in the TPM by $n$. 
The second state transition rule models the first phase for \emph{Alice}. 
\normalsize
\begin{align}
\mathit{init([n], |p|)} &\sqle~\sqra \langle , \mathit{alice([n])} \rangle, \notag \\
	&\langle \mathit{tpm(bob[], |p|)}, \mathit{tpm(bob[], h(|p|, [n]))} \rangle \label{rule:duplication}
\end{align}
\normalsize
The constant $bob[]$ is the AIK value of \emph{Bob}'s TPM. 
\emph{Alice} enters a state called \emph{alice} 
after \emph{Alice} confirms that the nonce $n$ is extended to \emph{Bob}'s TPM. 
Meanwhile, the nonce $n$ is extended to \emph{Bob}'s TPM as described in the protocol.  
After the \emph{alice} state is presented, 
\emph{Alice} could repeatedly conduct the second phase of the protocol for infinitely many times. 

The optional rule (\ref{rule:reboot}) below specifies the reboot behavior of the TPM. 
\normalsize
\begin{align}
	\sqle~\sqra \langle \mathit{tpm(|aik|, |p|)}, \mathit{tpm(|aik|, boot[])} \rangle \label{rule:reboot}
\end{align}
\normalsize
In this work, we prove that the digital envelope protocol is secure 
when the TPM reboot is disallowed. 
We also show that this protocol is subject to attack otherwise.  



\subsection{Accessibility} 
\label{sub:accessibility}

Besides the rules, we also need to specify the object accessibilities for the adversary. 
The accessibility describes the objects the adversary have access to. 
So given a state in a rule, 
we can decide whether the states can be accessed by the adversary or not. 
For instance, in the DEP, the adversary can access \emph{Bob}'s TPM, 
and he can use additional TPMs to process messages if necessary. 
\normalsize
\begin{align*}
    \mathit{access}~&\mathit{tpm(bob[], |p|)} \\
    \mathit{access}~&\mathit{tpm(|aik|, |p|)}
\end{align*}
\normalsize
We match the state patterns by substituting the terms in the states. 
We discuss more details about accessibility and pattern matching in Section~\ref{sub:searching}. 


\subsection{Query} 
\label{sub:query}

In this paper, we focus on reachability properties such as secrecy. 
For instance, we want to ensure that \emph{Bob} cannot open the secret $s$ 
as well as obtain the proof for his revoke action 
$\mathit{certpcr(bob[], h(p, \mathit{revoke}[]))}$ at the same time 
for any iteration $\mathit{secret(s, p, pkey)}$ in the DEP. 
If he can, it means that \emph{Bob} can cheat in the protocol. 
We add supplementary rules to represent 
whether the adversary has the ability to obtain certain terms as events, 
such that we could simply check if those events are reachable or not. 

We need to first add another state transferring rule when we want to check reachability. 
This rule models that \emph{Alice} has indeed accepted the certification of the key. 
\normalsize
\begin{align}
\mathit{gensrt([s], p, pkey)}, &\mathit{k(keycert(bob[], pkey, h(p, open[])))} \notag \\
	\sqle \mathit{alice(|n|)} &\sqra \mathit{\langle , secret([s], p, pkey) \rangle} \label{rule:iteration}
\end{align}
\normalsize

The queries are generally state consistent rules, 
but they have event conclusions. 
In the DEP, we are interested in the reachability properties as follows. 
\normalsize
\begin{align}
    &\mathit{gensrt([s], |p|, |pkey|)}, \mathit{k([s])} \notag \\
        &~~~~~~~~\sqle \mathit{secret([s], |p|, |pkey|)} \sqra \mathit{opened()} \label{query:opened} \\
    &\mathit{gensrt([s], |p|, |pkey|)}, \mathit{k(pcrcert(bob[], h(|p|, \mathit{revoke}[])))} \notag \\
        &~~~~~~~~\sqle \mathit{secret([s], |p|, |pkey|)} \sqra \mathit{revoked()} \label{query:revoked} \\
    &\mathit{gensrt([s], |p|, |pkey|)}, \mathit{k(pcrcert(bob[], h(|p|, \mathit{revoke}[])))} \notag \\
        &~~~~~~~~, \mathit{k([s])}\sqle \mathit{secret([s], |p|, |pkey|)} \sqra \mathit{attack()} \label{query:attack} 
\end{align}
\normalsize
The first query (rule \ref{query:opened}) means that \emph{Bob} can open the envelope and extract the nonce $[s]$. 
Similarly, the second query (rule \ref{query:revoked}) means that the PCR quote can be issued from the TPM 
if \emph{Bob} chooses to revoke the right of opening the envelope. 
The third query (rule \ref{query:attack}), the most interesting one, checks 
whether \emph{Bob} can get the value of the nonce $[s]$ 
as well as the proof for his revoke action from his TPM at the same time. 
As can be seen, we can name the events differently 
and check several queries at the same time. 

Because verification for security protocol is generally undecidable, 
our algorithm cannot guarantee termination. 
Hence we define correctness under the condition of termination (partial correctness) as follows. 
In Section~\ref{sec:verification}, we present our verification algorithm 
on reachability checking, together with its partial correctness proofs. 

\begin{definition}[Partial Correctness] 
	A verification algorithm is \textbf{partially sound} if and only if 
    the target event is reachable when the algorithm can terminate and claim that the event is reachable. 
    It is \textbf{partially complete} if and only if 
    the target event is unreachable when the algorithm can terminate and claim that the event is unreachable. 
\end{definition}



\section{Verification Algorithm} 
\label{sec:verification}

After a protocol is correctly specified (as illustrated in Section~\ref{sec:specification}), 
we present how to verify the protocol in details in this section. 
During the verification, we divide our algorithm into two phases. 
The first phase is targeted at constructing a knowledge searching base 
by \emph{knowledge forward composition} and \emph{state backward transformation}. 
Based on the knowledge base, we could then perform query searching 
to find valid attacks in the second phase. 

In order to verify security protocols, 
the verification algorithm needs to consider all possible behaviors of the adversary. 
Because the adversary adopted in this work can generate new names dynamically at runtime, 
the verification process cannot be conducted in a straightforward manner. 
To guide the attack searching procedure so that it can terminate, 
we adopt a similar strategy as proposed in~\cite{Bla01} that applies to the Horn theory. 

Our algorithm can be briefly described as follows. 
Recall that a rule of the form $H : M \sqle S : O \sqra V$ says that 
the $V$ is true when all the predicates in $H$ are satisfied and all the states $S$ are presented 
under the restrictions of state mappings $M$ and orderings $O$. 
On one hand, if a predicate in a rule is not yet satisfied, 
we try to use a state consistent rule's conclusion to fulfill it by \emph{rule composition}. 
However, if the predicate is a singleton, that is a fact of the form $k(v)$ where $v$ is a variable, 
and the value of $v$ is not related to other facts in the rule, 
the singleton could be automatically fulfilled as the adversary assumed in our paper can generate new names. 
Additionally, events are not unifiable in our framework 
as the events in the predicates and the conclusions are different. 
Thus we reserve a set of facts $\mathcal{N}$ from unifying with other facts. 
In this work, $\mathcal{N}$ consists of \emph{events} and \emph{singletons}. 
On the other hand, 
if several states are presented in a rule, 
some of the states should be the latest ones that are presented when the conclusion is given, 
while others are the outdated states. 
Thus, we identify the latest states 
and deduce them to their previous states with the help of \emph{rule transformation}. 
By performing the \emph{rule composition} and \emph{rule transformation} iteratively, 
once the fixed-point can be reached for the knowledge base, 
the query can then be answered directly from the rules in the knowledge base. 

\subsection{Knowledge Base Construction} 
\label{sub:knowledge_base_construction}

In this section, we compose existing rules to generate new rules 
until the fixed point of the searching knowledge base is reached. 
Basically, when we compose two rules together, 
the term encoded in the conclusion of the first rule should be unifiable with 
the term in a predicate of the second rule. 
We use the most general unifier to unify the terms. 

\begin{definition}[Most General Unifier]
\label{def:unification}
    If $\sigma$ is a substitution for both terms $t_1$ and $t_2$ so that $\sigma t_1 = \sigma t_2$, 
    we say $t_1$ and $t_2$ are unifiable and $\sigma$ is a unifier for $t_1$ and $t_2$. 
    If $t_1$ and $t_2$ are unifiable, the most general unifier for $t_1$ and $t_2$ is a unifier $\sigma$,  
    where for all unifiers $\sigma'$ of $t_1$ and $t_2$ 
    there exists a substitution $\sigma''$ such that $\sigma' = \sigma''\sigma$. 
\end{definition}

The unification of the facts is defined if and only if 
their predicate names are matched and the corresponding terms in the facts can be unified. 
According to Section~\ref{sec:specification}, 
we have two kinds of rules in our framework, 
i.e., \emph{state consistent rules} and \emph{state transferring rules}. 
\emph{State consistent rules} have a fact as conclusion, 
so given an unsatisfied predicate in a rule, 
we can compose the \emph{state consistent rule} to it to provide the predicate. 
The rule composition is formally defined as follows. 
\begin{definition}[Rule Composition]
\label{def:composition} 
	Let $R = H : M \sqle S : O \sqra f$ be a state consistent rule 
    and $R' = H' : M' \sqle S' : O' \sqra V$ be either a state consistent rule or a state transferring rule. 
    Assume there exists $f_0 \in H'$ such that $f$ and $f_0$ are unifiable with the most general unifier $\sigma$. 
    Given $S_0 = \{ s_0 | \langle f_0, s_0 \rangle \in M' \}$, 
    the rule composition of $R$ with $R'$ on the fact $f_0$ is defined as 
    \normalsize 
    \begin{align*} 
        R \circ_{f_0} R' &= \sigma(H \cup (H' - \{ f_0 \})) : \sigma(M \cup M') \\
        &\sqle \sigma (S \cup S' : O \oplus O' \oplus S \times S_0) \sqra \sigma V. 
    \end{align*} 
    \normalsize 
\end{definition} 

\begin{example}
For instance, given two simplified rules as follows. 
We omit the mappings and orderings when they are trivial 
and use special characters (e.g., $\spadesuit$, $\blacklozenge$) 
to indicate the facts and states in the mappings and orderings. 
\begin{align*}
    \mathit{gensrt([s], |p|, pkey)}& \sqle \mathit{tpm(bob[], h(|p|, open[]))}^{\spadesuit} \sqra \mathit{k([s])} \\
    \mathit{gensrt([s], |p|, pkey)}&, \mathit{[s]}^{\blacklozenge} : \{ \langle \blacklozenge, \clubsuit \rangle \} \\
        \sqle \mathit{tpm}&\mathit{(bob[], h(|p|, revoke[]))}^{\clubsuit} \sqra \mathit{attack()}
\end{align*}
The first rule means that the secret $s$ can be revealed 
when \emph{Bob}'s TPM has the PCR value $\mathit{h(p, open[])}$. 
The second rule means if \emph{Bob}'s TPM has the PCR value $\mathit{h(p, revoke[])}$ 
and the secret $s$ is revealed (the envelope is opened), 
we have found an attack. 
Their rule composition on the fact $f_0 = \mathit{k([s])}$ is 
\begin{align}
    \mathit{gensrt}(&[s], |p|, \mathit{pkey}) \sqle \mathit{tpm(bob[], h(|p|, open[]))}^{\spadesuit}, \notag \\
            \mathit{tpm}&\mathit{(bob[], h(|p|, revoke[]))}^{\clubsuit} : \spadesuit \le \clubsuit 
            \sqra \mathit{attack()} \label{rule:comp}
\end{align}
which means that $\mathit{open[]}$ should be extended to \emph{Bob}'s TPM 
before $\mathit{revoke[]}$ is extended. 
This is apparent because the last state of \emph{Bob}'s TPM, 
according to the rules, should have $\mathit{revoke[]}$ extended. 
\end{example}

Given a state consistent rule with a conclusion $f$, 
it specifies that we can obtain $f$ if its predicates are provided and the states form a valid state trace. 
Furthermore, some of the states are the latest states when the conclusion is given. 
Among the latest states, the latest state transformation is taken on some of them. 
If we can identify those latest states for the latest state transformation, 
we then can deduce their precedent states using the corresponding state transferring rule. 
We define $S_0$ as the cover set of $S$
if $s_0 \in S_0, s \in S, s_0 \le s$ then $s \in S_0$. 
Assume $c$ is a conversion and $\mathit{post}(c)$ is unifiable with a state $s$ under $\sigma$, 
we define the join operator $c \bowtie_{\sigma} s = \sigma \mathit{pre}(c)$. 
Besides, we define $[s]^S$ as the $\sim$ partition of $s$ in the state set $S$. 
The state transformation is then defined as follows. 
\begin{definition}[State Transformation]
\label{def:transformation} 
	Let $R = H : M \sqle S : O \sqra C$ be a state transferring rule 
    and $R' = H' : M' \sqle S' : O' \sqra f$ be a state consistent rule. 
    Assume there exists a unifier $\sigma'$ and an injective function $m : C \rightarrow \mathds{P}(S')$ 
    such that $\cup_{c \in C} \sigma' m(c)$ is a cover set of 
    $\cup_{c \in C} [\sigma' \mathit{post}(c)]^{\sigma' S}$ and 
    $\forall c \in C, \forall s \in m(c), c \bowtie_{\sigma'} s$ is defined. 
    Let $\sigma$ be the most general unifier of $\sigma'$ and 
    $S_n = \sigma S' - \mathit{post}(\sigma C)$, 
    the state transformation of applying $R$ to $R'$ on $m$ is defined as 
    \normalsize 
    \begin{align*} 
        &R \bowtie_{m} R' = \sigma (H \cup H') : \sigma (M \cup M') \sqle 
            \sigma S \cup S_n \cup \mathit{pre}(\sigma C) \\
            &: \sigma O \oplus \sigma O' 
            \oplus \mathit{pre}(\sigma C) \times \mathit{pre}(\sigma C) 
            \oplus (\oplus_{c \in C} (([\sigma pre(c)]^{\sigma S} \\
            &~~~~- \sigma m(c)) \times \sigma pre(C) \oplus \sigma pre(C) \times \sigma m(c))) \sqra \sigma f.  
    \end{align*} 
    \normalsize 
\end{definition} 

\begin{example}
For instance, if the PCR value extending rule (\ref{rule:extend}) 
is used for transferring the states in rule (\ref{rule:comp}), 
we first enumerate the state cover set of rule (\ref{rule:comp}) 
as $\{ \clubsuit \}, \{ \spadesuit, \clubsuit \}$. 
Because the states of $\{ \spadesuit, \clubsuit \}$ cannot be unified, 
we have only one valid rule after the state transformation. 
\begin{align*}
    \mathit{gensrt}([s], |p|, \mathit{pkey}), \mathit{k(revoke[])} 
    \sqle &\mathit{tpm}\mathit{(bob[], |p|)}^{\heartsuit}, \notag \\
            \mathit{tpm(bob[], h(|p|, open[]))}^{\spadesuit} &: \spadesuit \le \heartsuit 
            \sqra \mathit{attack()} \label{rule:comp}
\end{align*}
Since the new generated rule has an unsatisfied predicate that is not in $\mathcal{N}$, 
the verification algorithm continues. 
However, when TPM reboot is disallowed, 
these two states remained in the rule can never be unified to one state, 
so the \emph{attack} event cannot be reached. 
The detailed discussions are available in the reachability analysis.
\end{example}

The adversary can generate new names. 
If a singleton predicate is not related to other facts in a rule, 
the adversary could generate a random fact and use it as the singleton predicate 
so that it can be removed from the predicates. 
In addition, given two events with the same key in the predicates, 
they should be unified and merged. 
Furthermore, for any two states $s \sim s'$ and $s \le s' \land s' \le s \in O$, 
they should be merged because clearly they are the same state. 
Meanwhile, any mappings and orderings 
related to the non-existing facts and states should be removed as well. 
\begin{definition}[Rule Validation] 
\label{def:validation} 
	Let $R = H : M \sqle S : O \sqra V$ be a rule. 
    We define a rule as valid if and only if there exists a unifier $\sigma'$ such that 
    any event in $H$ under the same key is unifiable with $\sigma'$. 
    Let $\sigma$ be the most general unifier of $\sigma'$, 
    The rule validation of $R$ is defined as 
    \normalsize 
    \begin{align*} 
        R \Downarrow = &\mathit{clear}(\mathit{merge}(\sigma H : \mathit{rm}(\sigma M))) \\
            &\sqle \mathit{elim}(\sigma S : \mathit{rm}(\sigma O)) \sqra \sigma V 
    \end{align*} 
    \normalsize 
	The function $\mathit{merge}$ merges duplicated expressions; 
    the function $\mathit{clear}$ removes any singleton 
    in which the variable does not appear in other facts in the rule; 
    the function $\mathit{elim}$ eliminates any isolated states and those related orderings; 
    and the function $\mathit{rm}$ removes the mappings and orderings related to no longer existed facts and states. 
\end{definition} 

When a new rule is composed from existing ones, 
we need to make sure it is not redundant. 
Suppose two rules $R$ and $R'$ can make the same conclusion, 
while (1) $R$ requires less predicates, mappings and orderings than $R'$ and (2) $R$ is no less general than $R'$,
$R'$ should be implicated by $R$. 
The joint operator `$\cdot$' between mapping $M$ and ordering $O$ is defined as 
\[
    M \cdot O = \{ \langle f, s \rangle | \langle f, s' \rangle \in M \land s' \le s \in O \}.
\]
We then define rule implication as follows. 
\begin{definition}[Rule Implication]
\label{def:implication} 
	Let $R = H : M \sqle S : O \sqra V$ and $R' = H' : M' \sqle S' : O' \sqra V'$ be two rules. 
	We define $R$ implies $R'$ denoted as $R \Rightarrow R'$ if and only if 
	$\exists \sigma, \sigma V = V' \land \sigma H \subseteq H' \land \sigma (M \cdot O) \subseteq (M' \cdot O')
		\land \sigma S \subseteq S' \land \sigma O \subseteq O'$. 
\end{definition}

The knowledge base construction algorithm is shown in Algorithm~\ref{alg:base},
where we use $\mathcal{B}_{init}$ to denote the initial set of rules as specified
and use $\mathcal{B}$ to denote the knowledge base constructed by the algorithm. 
In the following discussions, we will use $\mathcal{B}$ and $\mathcal{B}_{init}$
directly assuming they are clear from the context.

In the $\mathit{add}$ procedure (Line 1 to Line 6), we use rule implication to ensure that 
redundancies will not be introduced into the knowledge base. 
The main procedure, starting at Line 7, first adds all the initial rules into the knowledge base (Line 8 to Line 11), 
then it composes and transforms the rules until a fixed point is reached. 
We discuss the rule composition and the state transformation separately as follows. 

For the rule composition (Line 13 to Line 20), when rules can be composed in an unlimited method, 
infinitely many composite rules can be generated, 
which we shall prevent.
For instance, we can compose the rule (\ref{rule:pk}) to itself 
by treating the public key as a valid private key 
and the composite rule becomes $\mathit{k(skey)} \sqle~\sqra \mathit{k(pk(pk(skey)))}$, 
which could then be composed to the rule (\ref{rule:pk}) again. 
Furthermore, as mentioned previously, 
singleton predicates that are not related to other facts in the rule can be removed, 
thus it is unnecessary to compose two rules on a singleton fact. 
As the rules cannot compose on events, when two rules are composed in our algorithm, 
we need to ensure that they can be composed on a fact $f_0$ such that $f_0 \not \in \mathcal{N}$. 
Moreover, when two rules are composed in the form of $R \circ_{f_0} R'$ 
and $R$ has predicates which are not contained in $\mathcal{N}$, 
we should fulfill those predicates first. 
Thus we ensure that $R$'s predicates are all in $\mathcal{N}$. 

For the state transformation (Line 21 to Line 28), as we deduce the states in a backward manner, 
we should make sure that the states we transferred in the rule are latest, 
and the target event is presented in the rule conclusion. 
In addition, its predicates should be all contained in $\mathcal{N}$, 
resulting from the same reason mentioned previously.

Finally, we select a subset of the rules. 
Their predicates should only be singletons and events 
as rules with unfulfilled predicates cannot be used to conduct attacks directly. 
Their conclusion should be an event 
because these rules are the only interesting rules to us. 
$\mathcal{B}_{v}$ is introduced in Line~\ref{alg:base:bv} to help the explanation of the proof for Theorem~\ref{thm:base}. 

\begin{algorithm}[t]
\small
    \SetAlgoLined
    \SetKwInOut{Input}{Input}
    \SetKwInOut{Output}{Output}
    \Input{$\mathcal{B}_{init}$ - initial rules}
    \Output{$\mathcal{B}$ - knowledge base}
    \SetKwFunction{add}{$\mathit{add}$}
    \SetKwProg{proc}{Procedure}{}{}
    \SetKwProg{algo}{Algorithm}{}{}
    \proc{\add{R, rules}}{ \label{alg:base:add}
        \For{$R_b \in \mathit{rules}$}{
            \lIf{$R_b \Rightarrow R$}{\Return $\mathit{rules}$}
            \lIf{$R \Rightarrow R_b$}{$\mathit{rules} = \mathit{rules} - \{ R_b \}$}
        }
        \Return $\{ R \} \cup \mathit{rules}$; \\
    }
    \algo{}{
        $\mathit{rules} = \emptyset$; \\
        \For{$R \in \mathcal{B}_{init}$}{ \label{alg:base:bi}
            $\mathit{rules} = \mathit{add}(R, \mathit{rules})$; \\
        }
        \Repeat{fix-point is reached}{ \label{alg:base:fix:b}
                \textbf{Case 1.} \\
                Select a state consistent rule $R = H \sqle S : O \sqra f$ \\
                and a general rule $R' = H' \sqle S' : O' \sqra V$ \\
                from $\mathit{rules}$ such that \\
                1. $\forall p \in H: p \in \mathcal{N}$; \\
                2. $\exists f_0: f_0 \not \in \mathcal{N}$; \\
                3. $(R \circ_{f_0} R') \Downarrow$ is valid; \\
                $\mathit{rules} = \mathit{add}((R \circ_{f_0} R') \Downarrow, \mathit{rules})$; \\
                \textbf{Case 2.} \\
                Select a state transferring rule $R = H \sqle S : O \sqra C$ \\
                and a general rule $R' = H' \sqle S' : O' \sqra f$ \\
                from $\mathit{rules}$ such that \\
                1. $\forall p \in H \cup H': p \in \mathcal{N}$; \\
                2. $f$ is an event; \\
                3. $\exists m, (R \bowtie_m R') \Downarrow$ is valid; \\
                $\mathit{rules} = \mathit{add}((R \bowtie_m R') \Downarrow, \mathit{rules})$; \\
        } \label{alg:guided:fix:e}
        $\mathcal{B}_v = \mathit{rules}$;\label{alg:base:bv} \\ 
        \Return $\mathcal{B}= \{ R \in \mathit{rules} | \forall p \in \mathit{predicates}(R), p \in \mathcal{N} 
            \land \mathit{conclusion}(R)~\mathit{is}~\mathit{an}~\mathit{event} \}$;\\
    }
    \caption{Knowledge Base Construction}
    \label{alg:base}
\end{algorithm}

\begin{figure}[t]
    \centering
    \begin{subfigure}[b]{0.46\textwidth}
        \includegraphics[width=\textwidth]{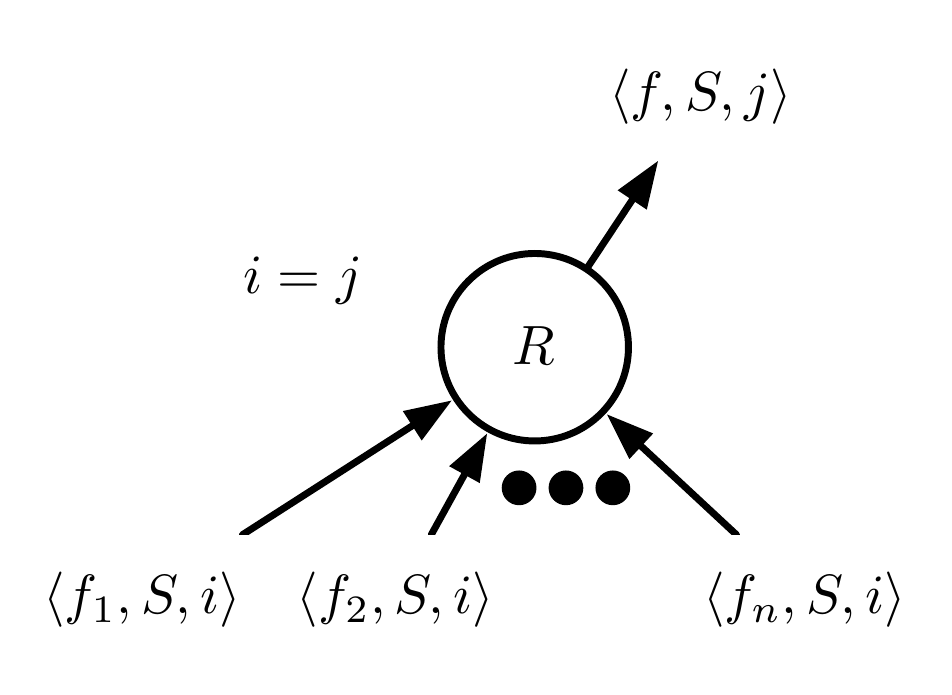}
        \caption{State Consistent Rule}
        \label{fig:tree_consistent}
    \end{subfigure}
    \begin{subfigure}[b]{0.50\textwidth}
        \includegraphics[width=\textwidth]{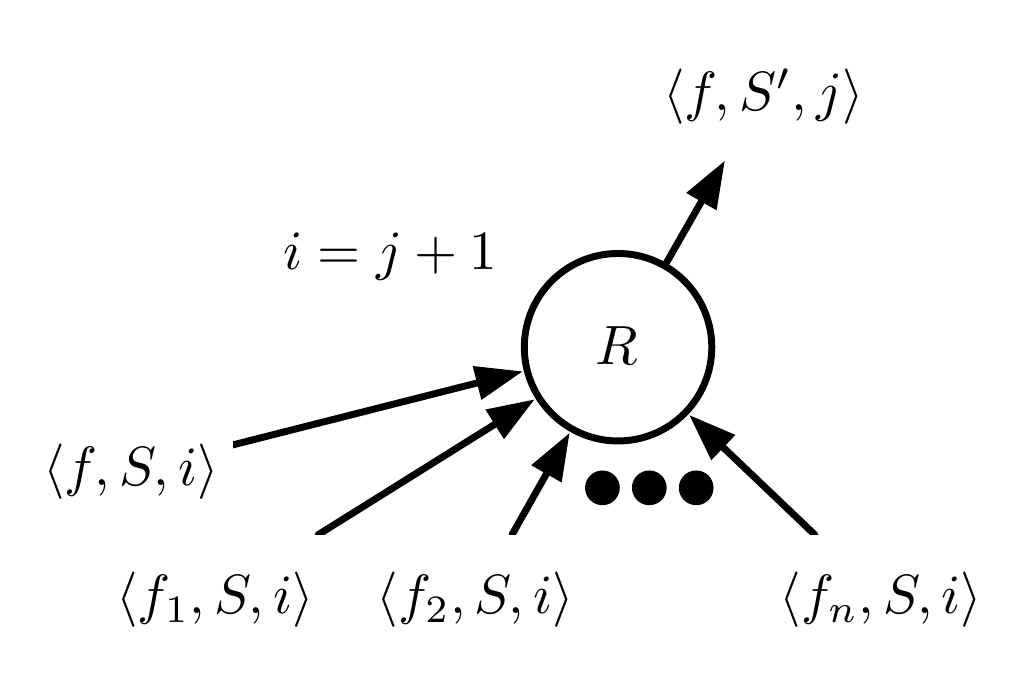}
        \caption{State Transferring Rule}
        \label{fig:tree_transfer}
    \end{subfigure}
    \caption{Rule in derivation tree}
    \label{fig:rule_in_tree}
\end{figure}

Previously, we have reformulated our verification problem as
reachability analysis of events (see Section~\ref{sub:query}). 
Whenever an event is derivable from the initial rules $\mathcal{B}_{init}$, 
there must exist a derivation tree for that event defined as follows. 
\begin{definition}[Derivation Tree]
\label{def:derivation}
    Let $\mathcal{B}$ be a set of closed rules and $e$ be an event, 
    where the closed rule is a rule with its conclusion initiated by its predicates and states. 
    $e$ can be derived from $\mathcal{B}$ if and only if 
    there exists a finite derivation tree defined as follows. 
	\begin{enumerate}
		\item Every edge in the tree is labeled by a fact $f$, a state set $S$ and an index $i$, 
        and $\forall s, s' \in S$ we have $s \not \sim s'$. 
		\item Every node is labeled by a rule in $\mathcal{B}$. 
		\item Suppose the node is labeled by a state consistent rule 
        as shown in Figure~\ref{fig:tree_consistent}, 
        then we have $R \Rightarrow H : M \sqle S : O \sqra f$ in which 
        $H = f_1, \ldots, f_n$, $M = H \times S$, $O = S \times S$ 
        and the indexes labeled on the outgoing edge and incoming edges are the same. 
		\item On the other hand, if the node is labeled by a state transferring rule 
        as shown in Figure~\ref{fig:tree_transfer}, 
        there exists $C$ such that $R \Rightarrow H : M \sqle S_0 : O \sqra C$ 
        in which $H = f_1, \ldots, f_n$, $S_0 = S - pre(C) = S' - post(C)$, 
        $M = H \times S_0$, $O = S_0 \times S_0$ 
        and the indexes labeled on the incoming edges equal to the index labeled on the outgoing edge plus $1$. 
		\item The outgoing edge of the root is labeled by the event $e$ and the index $1$. 
		\item The incoming edges of the leaves are only labeled by facts in $\mathcal{N}$ with the same index. 
		\item The edges with the same index have the same state. 
	\end{enumerate}
\end{definition}
In the tree, every node is labeled by a rule in $\mathcal{B}_{init}$ 
to represent how the knowledge is deduced. 
Additionally, 
we label the edges with states to indicate when the knowledge deduction rule is applied 
and how the state transferring rule affects the states.
Furthermore, we also label every edge with an index
to group the knowledge under the same state together as well as 
to denote the valid trace of state transferring, 
which eases the proof of Theorem~\ref{thm:base}. 

The Lemma~\ref{lem:combine} demonstrates how to replace two directly connected nodes in the derivation tree 
with one node labeled by a composite rule with the same state and the same index. 
\begin{lemma}\label{lem:combine}
    If $R_o \circ_{f} R'_o$ is defined, $R_t \Rightarrow R_o$ and $R'_t \Rightarrow R'_o$, 
    then either there exists $f'$ such that $R_t \circ_{f'} R'_t$ is defined 
    and $R_t \circ_{f'} R_t' \Rightarrow R_o \circ_{f} R'_o$, 
    or $R_t' \Rightarrow R_o \circ_{f} R'_o$. 
\end{lemma}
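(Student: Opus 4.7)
My plan is to prove this by case analysis on whether the fact $f_0 \in H'_o$ (the one matched with $f$ in the composition $R_o \circ_f R'_o$) comes from a predicate of $R'_t$ via the substitution witnessing $R'_t \Rightarrow R'_o$. First, by renaming, I may assume $R_t$ and $R'_t$ have disjoint variables. Let $\sigma_1$ witness $R_t \Rightarrow R_o$, so that $\sigma_1$ sends the conclusion of $R_t$ to $f$, and $\sigma_1 H_t \subseteq H_o$, with the analogous containments holding on states, mappings, and orderings; similarly let $\sigma_2$ witness $R'_t \Rightarrow R'_o$. Let $\sigma$ denote the MGU unifying $f$ and $f_0$ used in forming $R_o \circ_f R'_o$.

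In Case~1, suppose $f_0 \in \sigma_2 H'_t$, so there exists $f'_0 \in H'_t$ with $\sigma_2 f'_0 = f_0$. Because $\sigma$ unifies $f$ and $f_0$, and $\sigma_1(\mathrm{conc}(R_t)) = f$, $\sigma_2 f'_0 = f_0$, the combined substitution $\sigma \circ (\sigma_1 \cup \sigma_2)$ (well-defined by the disjoint-variable assumption) unifies $\mathrm{conc}(R_t)$ with $f'_0$. Hence an MGU $\tau$ for these two terms exists, and by the MGU property there is $\theta$ with $\sigma \circ (\sigma_1 \cup \sigma_2) = \theta \tau$. I would then form $R_t \circ_{f'_0} R'_t$ using $\tau$ and verify that $\theta$ witnesses $R_t \circ_{f'_0} R'_t \Rightarrow R_o \circ_f R'_o$ by walking through each component of Definition~\ref{def:composition}: predicates, mappings, orderings, states, and conclusion.

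In Case~2, suppose $f_0 \notin \sigma_2 H'_t$. Then $\sigma_2 H'_t \subseteq H'_o - \{f_0\}$, and I claim that $\sigma \sigma_2$ witnesses $R'_t \Rightarrow R_o \circ_f R'_o$ directly. For the conclusion, $\sigma \sigma_2 \mathrm{conc}(R'_t) = \sigma V'_o$, which is the conclusion of $R_o \circ_f R'_o$. For the predicates, $\sigma \sigma_2 H'_t \subseteq \sigma(H'_o - \{f_0\}) \subseteq \sigma(H_o \cup (H'_o - \{f_0\}))$. The containments for states, mappings, and orderings follow by the same bookkeeping using the corresponding containments for $R'_t \Rightarrow R'_o$ and the fact that the composition only adjoins $\sigma H$, $\sigma S$, and the new cross-orderings $\sigma (S \times S_0)$, which enlarge the target sets.

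The main obstacle will be Case~1, specifically verifying that $\tau$ is general enough and that the extra orderings $S \times S_0$ introduced by the composition are preserved under $\theta$. The join $M \cdot O$ used in the implication definition interacts with these newly added cross-orderings in a subtle way; I will need to check that the intermediate states introduced when composing at $f'_0$ in $R_t \circ_{f'_0} R'_t$ map, via $\theta$, into states that the cross-orderings in $R_o \circ_f R'_o$ also constrain correctly, so that $\sigma(M \cdot O) \subseteq (M' \cdot O')$ is preserved after composition on both sides.
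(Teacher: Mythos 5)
Your proposal is correct and follows essentially the same route as the paper's proof: the identical case split on whether the matched fact $f_0$ is the image of some predicate of $R'_t$, the same MGU-factorization argument ($\sigma\circ(\sigma_1\cup\sigma_2)=\theta\tau$) to define the composition on the lifted rules, and the same componentwise containment checks. The subtlety you flag about the added cross-orderings is resolved exactly as you anticipate, by showing $\sigma_2 S' \subseteq S_0$ from $\sigma_2 M'_t \subseteq M'_o$ and $\sigma_2 f'_0 = f_0$, which is what the paper does.
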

\begin{proof}
    Let $R_o = H_o : M_o \sqle S_o : O_o \sqra f_o$, 
    $R'_o = H'_o : M'_o \sqle S'_o : O'_o \sqra V_o$, 
    $R_t = H_t : M_t \sqle S_t : O_t \sqra f_t$, 
    $R'_t = H'_t : M'_t \sqle S'_t : O'_t \sqra V_t$. 
    There should exist a substitution $\sigma$ 
    such that $\sigma f_t = f_o$, $\sigma H_t \subseteq H_o$, $\sigma M_t \subseteq M_o$, 
    $\sigma S_t \subseteq S_o$, $\sigma O_t \subseteq O_o$, 
    $\sigma f'_t = f'_o$, $\sigma H'_t \subseteq H'_o$, $\sigma M'_t \subseteq M'_o$, 
    and $\sigma S'_t \subseteq S'_o$, $\sigma O'_t \subseteq O'_o$. 
    Assume 
    $S_o \circ_{f} S'_o = 
        \sigma'(H_o \cup (H'_o - \{ f \})) : \sigma'(M_o \cup M'_o) 
        \sqle \sigma'(S_o \cup S'_o) : \sigma'(O_o \cup O'_o \cup S_o \times S) \sqra \sigma' V_o$ 
    where $S = \{ s | \langle f, s \rangle \in M'_o \}$. 
    We discuss the two cases as follows. 
    
    \emph{First case.} Suppose $\exists f' \in H'_t$ such that $\sigma f' = f$. 
    Since $R_o \circ_f R'_o$ is defined and $\sigma' f = \sigma' f_o$, 
    we thus have $\sigma'\sigma f' = \sigma'\sigma f_t$. 
    As $f'$ and $f_t$ are unifiable, $S_t \circ_{f'} S'_t$ is defined. 
    Let $\sigma_t$ be the most general unifier, 
    then $\exists \sigma'_t$ such that $\sigma'\sigma = \sigma'_t\sigma_t$. 
    Suppose we have 
    $S_t \circ_{f'} S'_t = 
        \sigma_t(H_t \cup (H'_t - \{ f' \})) : \sigma_t(M_t \cup M'_t) 
        \sqle \sigma_t(S_t \cup S'_t) : \sigma_t(O_t \cup O'_t \cup S_t \times S') \sqra \sigma_t V_t$ 
    where $S' = \{ s | \langle f', s \rangle \in M'_t \}$. 
    First we prove $\sigma S' = \{ s | \langle \sigma f', s \rangle \in \sigma M'_t \} 
        = \{ s | \langle f, s \rangle \in \sigma M'_t \} 
        \subseteq \{ s | \langle f, s \rangle \in M'_o \} = S$. 
    Since $\sigma'_t\sigma_t(H_t \cap (H'_t - \{ f' \})) 
        = \sigma'\sigma(H_t \cup (H'_t - f')) 
        \subseteq \sigma'(H_o \cup (H'_o - \{ f \}))$, 
    $\sigma'_t\sigma_t(S_t \cup S'_t) = \sigma'(\sigma S_t \cup \sigma S'_t)
        \subseteq \sigma'(S_o \cup S'_o)$, 
    $\sigma'_t\sigma_t(O_t \cup O'_t \cup S_t \times S') 
        = \sigma'(\sigma O_t \cup \sigma O'_t \cup \sigma S_t \times \sigma S')
        \subseteq \sigma'(O_o \cup O'_o \cup S_o \times S)$, 
    $\sigma'_t\sigma_t((M_t \cup M'_t) \cdot (O_t \cup O'_t \cup S_t \times S'))
         \subseteq \sigma' ((M_o \cdot M'_o) \cup (O_o \cup O'_o \cup S_o \times S))$, 
    and $\sigma'_t\sigma_t V_t = \sigma'\sigma V_t = \sigma' V_o$, 
    we have $R_t \circ_{f'} R_t' \Rightarrow R_o \circ_{f} R'_o$. 
    
    \emph{Second case.} $\sigma H'_t \subseteq H'_o - \{ f \}$, 
    then $\sigma' \sigma H'_t \subseteq \sigma' (H_o \cup (H'_o - \{ f \}))$, 
    $\sigma' \sigma (M'_t \cdot O'_t) \subseteq \sigma' (M'_o \cdot O'_o) 
        \subseteq \sigma' (M_o \cdot O_o \cup M'_o \cdot O'_o)$, 
    $\sigma' \sigma S'_t \subseteq \sigma' S'_o \subseteq \sigma' (S_o \cup S'_o)$, 
    $\sigma' \sigma O'_t \subseteq \sigma' O'_o \subseteq \sigma' (O_o \cup O'_o \cup S_o \times S)$, 
    and $\sigma' \sigma V_t = \sigma' V_o$. 
    Therefore $R_t' \Rightarrow R_o \circ_{f} R'_o$. 
\end{proof}


\begin{theorem}\label{thm:base} 
    Any event $e$ that is derivable from the initial rules $\mathcal{B}_{init}$ 
    if and only if it is derivable from the knowledge base $\mathcal{B}$ constructed in Algorithm~\ref{alg:base}.
\end{theorem}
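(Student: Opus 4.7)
The plan is to prove the two directions separately, and the main obstacle will be the completeness direction.

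For the \emph{if} direction (soundness of $\mathcal{B}$), I would proceed by induction on the number of iterations of the main repeat-loop (Lines 12--\ref{alg:guided:fix:e}) of Algorithm~\ref{alg:base}. The invariant to maintain is: every rule $R$ in the current $\mathit{rules}$ set has the property that, whenever $R$'s conclusion can be used to close a node in some derivation tree, the same conclusion can be obtained by a sub-tree whose nodes are all labelled by rules of $\mathcal{B}_{init}$. The base case follows because $\mathit{rules}$ is initialised directly from $\mathcal{B}_{init}$. For the inductive step, each newly added rule is of the form $(R \circ_{f_0} R')\Downarrow$ or $(R \bowtie_m R')\Downarrow$, and the expected semantics of these operators (justified by Definitions~\ref{def:composition} and~\ref{def:transformation}) let us splice the sub-trees witnessing $R$ and $R'$ together into a single sub-tree that witnesses the composite rule. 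Having established this invariant, any derivation tree over $\mathcal{B}$ can be expanded node by node into a derivation tree over $\mathcal{B}_{init}$ for the same event $e$.

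For the \emph{only if} direction, I would fix a derivation tree $T$ of $e$ over $\mathcal{B}_{init}$ and transform it, by induction on the number of internal edges of $T$, into a derivation tree over $\mathcal{B}$. The driving idea is to repeatedly collapse two adjacent nodes in $T$ into a single node labelled by a composite rule that Algorithm~\ref{alg:base} has already added to $\mathit{rules}$. The collapse is performed in two passes. In the first pass I target edges carrying a fact $f \notin \mathcal{N}$: such an edge must come out of a state consistent rule node (since only state consistent rules have fact conclusions), so Lemma~\ref{lem:combine} applies. Before invoking it, I percolate collapses bottom-up so that the lower node's predicates are already in $\mathcal{N}$; this matches precondition~1 of Case~1 of the algorithm, while $f \notin \mathcal{N}$ matches precondition~2. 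Lemma~\ref{lem:combine} then gives a composite rule implying the local derivation, and by Definition~\ref{def:implication} we can replace the two nodes by one.

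The second pass absorbs state transferring rule nodes. After the first pass every remaining internal edge carries a fact in $\mathcal{N}$, so the only nodes whose conclusions are not events are state transferring ones. For each such node $N$, the state transferring rule must be connected (through the index/state labelling described in items~4 and~7 of Definition~\ref{def:derivation}) to a unique state consistent parent node $N'$ whose eventual conclusion is an event. I would extract from the tree's index-coherent state structure the injective mapping $m$ required by Definition~\ref{def:transformation} (the cover-set condition follows from the tree insisting that no two states in an edge label satisfy $\sim$, together with item~7 coupling equal-index edges) and apply Case~2 of the algorithm to replace $N$ and $N'$ by the single node labelled by $(R \bowtie_m R')\Downarrow$. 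Iterating, all state transferring nodes are eliminated, all internal edges carry only $\mathcal{N}$-facts, and the resulting tree's nodes are labelled by state consistent rules with predicates in $\mathcal{N}$ and conclusions that are events, i.e., by rules in $\mathcal{B}$.

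The hard part, as anticipated, is the second pass: I need a companion of Lemma~\ref{lem:combine} for the $\bowtie$ operator, and the construction of the injection $m$ from the derivation tree is delicate because it must respect the cover-set requirement, the compatibility of pre- and post-states under the most general unifier, and the orderings generated by Definition~\ref{def:transformation}. The indexing scheme on the tree edges (items~4--7 of Definition~\ref{def:derivation}) is precisely the bookkeeping device needed to make $m$ well-defined, so the technical content of the argument is to verify that this bookkeeping aligns with the algebraic preconditions of $\bowtie$ and that $\mathit{clear}$, $\mathit{merge}$, $\mathit{elim}$ and $\mathit{rm}$ from Definition~\ref{def:validation} do not drop any information needed by subsequent collapses.
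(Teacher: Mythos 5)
Your overall strategy coincides with the paper's: both directions are argued on the derivation tree, the composition collapses are justified by Lemma~\ref{lem:combine}, and the state transferring nodes are absorbed level by level using the edge indices to build the injection $m$ for Definition~\ref{def:transformation}. Your treatment of the \emph{if} direction (induction on loop iterations) is, if anything, more explicit than the paper's one-line argument. However, there is a concrete gap in your completeness direction, located exactly where you organise the work into two strictly separated passes.

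The problem is that composition and transformation cannot be cleanly sequenced. In your first pass you only collapse edges carrying facts $f \notin \mathcal{N}$, so edges carrying singletons $k(v)$ survive into the second pass. But the unifier $\sigma$ produced by a state transformation (Definition~\ref{def:transformation}) can instantiate $v$ to a non-variable term, turning that surviving predicate into a non-singleton fact. At that point the rule labelling your collapsed node has a predicate outside $\mathcal{N}$, so it is \emph{not} in $\mathcal{B}$, and the composition that would discharge it is a Case-1 step that your schedule has already forbidden. The paper's proof is built around precisely this interaction: it temporarily detaches subtrees hanging off singleton edges, interleaves compositions with transformations, resumes a blocked composition whenever a transformation promotes its singleton to a non-singleton, and only at the very end discards the subtrees whose connecting edges are \emph{still} singletons --- a step that needs the separate justification that the adversary can generate fresh names, so dropping such a subtree preserves validity of the derivation tree. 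Your proposal neither revisits compositions after transformations nor argues that the leftover singleton-rooted subtrees may be deleted, so the claim that the final tree is labelled only by rules of $\mathcal{B}$ does not yet follow. (Your candid admission that the second pass needs an unproved companion of Lemma~\ref{lem:combine} for $\bowtie$ is less of a divergence: the paper also handles that step informally, by appealing directly to Case~2 of Algorithm~\ref{alg:base} and the fixed point of $\mathcal{B}_v$.)
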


\begin{proof}
    \textbf{Only if.} 
    Assume the event $e$ is derivable from $\mathcal{B}_{init}$, 
    then there should exist a derivation tree $T_i$ for $e$ 
    and every node in the tree is labeled by a rule in $\mathcal{B}_{init}$. 
    According to the $\mathit{add}$ function in Algorithm~\ref{alg:base}, 
    a rule is removed only if it is implied by another rule, 
    so we have $\forall R \in \mathcal{B}_{init}, \exists R' \in \mathcal{B}_v, R' \Rightarrow R$, 
    where $\mathcal{B}_v$ appears at the line~\ref{alg:base:bv} in Algorithm~\ref{alg:base}. 
    Hence, we can replace all the rules labeled on tree with the rules in $\mathcal{B}_v$
    and get a new derivation tree $T_v$. 
    As can be seen from Algorithm~\ref{alg:base}, 
    some rules are filtered out from $\mathcal{B}_v$ to $\mathcal{B}$, 
    so we need to further prove that the nodes in $T_v$ can be composed and transformed 
    until a derivation tree $T$ is formed such that all the rules labeled on $T$ are rules in $\mathcal{B}$. 
    
    To continue the proof, we consider $T_v$ purely as a tree structure, 
    and each tree consists of a root and several connected sub-trees. 
    Next, we prove that each sub-tree is implied by a state consistent rule in $\mathcal{B}_v$. 
    Since the leaves of $T_v$ are implied by the state consistent rules, 
    the sub-trees of the leaves are directly implied by rules in $\mathcal{B}_v$. 
    Given two nodes $n$ and $n'$, $n$'s outgoing edge $f$ is one of incoming edges of $n'$. 
    Assume the subtree $n$ is implied by a state consistent rule $R$ in $\mathcal{B}_v$, 
    the node $n'$ is labeled by a rule $R'$ and $n'$ has a outgoing edge of $f'$. 
    \begin{itemize}
        \item  
        If $f \neq f'$, 
        we have $R \Rightarrow H : M \sqle S : O \sqra f$, 
        $R' \Rightarrow H' : M' \sqle S' : O' \sqra V$ and $f \in H'$. 
        Since $R_f = (H : M \sqle S : O \sqra f) \circ_f (H' : M' \sqle S' : O' \sqra V)$ is defined, 
        according to Lemma~\ref{lem:combine}, 
        the sub-tree $n'$ is also implied by a rule in $\mathcal{B}_v$ in two cases. 
        In the first case, there exists $f''$ in the predicates of $R'$, $R \circ_{f''} R' \Rightarrow R_f$. 
        If $f''$ is not a singleton, 
        because $\mathcal{B}_v$ is the fixed-point of Algorithm~\ref{alg:base}, 
        there should exist $R'' \in \mathcal{B}_v$ such that $R'' \Rightarrow R_f$. 
        So we can merge these two nodes in the tree and the proof continues. 
        Otherwise, i.e., $f''$ is a singleton, 
        we can detach the sub-tree of $n$ from tree $T_v$ temporarily. 
        With the composition and transformation processing, 
        $f''$ may be unified to a non-singleton fact, so the composition could continue. 
        If the other part of the tree has been processed and $f''$ is still a singleton, 
        we will prove later that $n$ can be removed from the tree and the derivation tree is still valid. 
        In the second case, we can remove the node $n$ and link its incoming links directly to $n'$, 
        so that the node $n'$ with more incoming edges is still implied by $R'$ and the proof continues. 
        \item 
        If $f = f'$, apparently we have that $R$ implies the subtree of $n'$. 
    \end{itemize}
    We can continue the rule composition until we reach the root 
    so that each subtree in $T_v$ is implied by a state consistent rule in $B_v$. 
    
    Notice that the states are not properly transferred in the rule that is labeled to the tree $T_v$, 
    so we also need to re-organize the states in the rule to form a valid state trace. 
    Since all the state duplications appear in the sub-tree are kept in the resulting rule, 
    we will merge them according to the state transformation. 
    Consider the root is labeled by a rule $R_r$, 
    all the states appeared in the tree $T_v$ should be presented in $R_r$. 
    According to the derivation tree, some of the edges are labeled by the same index. 
    So we prove in the following iterations, the resulting rule is still in $\mathcal{B}_v$. 
    The index starts with $1$, which is same index of the root, 
    and it is increased by $1$ after every iteration. 
    If currently the index is $i$, since the states in the rule are corresponding to the states in the edge, 
    so we can merge the states in the edges labeled by $i$ together. 
    According to the definition of the derivation tree, 
    from the edges labeled by $i+1$ to the edges labeled by $i$, 
    there exists a conversion set $C$ that converts some old states to the new states. 
    Hence, we could construct the mapping function $m$ defined in the state transformation, 
    and map each $c \in C$ to a set of states that should be merged 
    (the latest states for the latest state transferring rule). 
    After the state transformation, the largest states in the rule now are labeled by index $i+1$. 
    According to Algorithm~\ref{alg:base} case 2, 
    the new rule should be also in $\mathcal{B}_v$. 
    Notice that we have mentioned previously that some rules cannot be composed because 
    the incoming edge of the rule is labeled by a singleton. 
    Along with the state transformations, some singleton may be unified to a non-singleton fact, 
    so the rule composition could continue. 
    
    In this way, the rule composition and the state transformation can be conducted 
    until all states left are all labeled by the largest index. 
    If some inner edges are still labeled by singletons, 
    because the adversary can generate new names, 
    he can actively create a new value and label it to that edge, 
    so that he can drop the remaining sub-tree connected by that edge 
    and the remaining derivation tree is still valid. 
    Since the facts in leaves are the events and singletons, 
    including those failed with unification, 
    the resulting rule is in the output knowledge base $\mathcal{B}$. 
    
    \textbf{If.}
    Whenever a rule is added into $\mathcal{B}_v$, it should be composed or transferred from existing rules. 
    Thus all the rules in $\mathcal{B}_v$ should be derivable from $\mathcal{B}_{init}$. 
    Meanwhile $\mathcal{B}$ does not introduce extra rules 
    besides existing rules in $\mathcal{B}_v$, 
    so $\forall R' \in \mathcal{B}$, $R'$ is derivable from $\mathcal{B}$. 
\end{proof}


\subsection{Reachability Analysis} 
\label{sub:searching}

\begin{algorithm}[t]
\small
    \SetAlgoLined
    \SetKwInOut{Input}{Input}
    \SetKwInOut{Output}{Output}
    \Input{$\mathcal{B}$ - the knowledge base returned by Algorithm~\ref{alg:base}. }
    \Input{$e$ - the target event. }
    \Output{$b$ - if the event is reachable or not. }
    \SetKwProg{algo}{Algorithm}{}{}
    \algo{}{
        \For{$f_1, \ldots, f_n : M \sqle S : O \sqra f \in \mathcal{B}$ and $f = e$}{
            \lIf{$\exists \sigma, \forall s, s' \in S, s \sim s' \Rightarrow$ \\
            $\sigma s = \sigma s'$ and $\sigma s$ is accessible}
            {\Return $\mathit{true}$}
        }
        \Return $\mathit{false}$; \\
    }
    \caption{Query Contradiction Searching}
    \label{alg:search}
\end{algorithm}

When the knowledge base is constructed, 
we need to check if the target event is reachable or not. 
Given a rule in the base, 
if the predicates are only events and singletons, 
the adversary can fulfill them by asking the protocol to engage those events and generate new names. 
For the remaining states in the rule, 
we then need to check if the adversary has the access to the corresponding object patterns. 
Assume the accessibility is modeled as a set of state patterns $P$ according to Section~\ref{sub:accessibility}. 
We define a state $s$ as accessible to the adversary if 
$\exists p \in P$ such that $\exists \sigma$, $\sigma s = p$. 
For instance, if the attack needs a TPM $tpm(cary[], p)$ from another participant Cary, 
while \emph{Bob} only have the access to the TPM from himself and not preciously owned TPMs. 
Since there does not exist such a substitution $\sigma$ such that 
$\sigma tpm(cary[], p) = tpm(bob[], p')$ or $\sigma tpm(cary[], p) = tpm(aik, p')$, 
the attack found is impractical. 
Thus, a query can be answered using a simple algorithm as shown in Algorithm~\ref{alg:search}. 
It checks the target event against all the remaining rules in the knowledge base $\mathcal{B}$, 
and tries to find a rule whose predicates can be fulfilled and states can be accessed by the adversary. 
If there exists such a rule, the algorithm returns true; 
otherwise it returns false. 
We prove the partial correctness of our algorithm as follows. 
\begin{theorem}\label{thm:search}
    An event $e$ is derivable from the initial rules $\mathcal{B}_{init}$ if and only if 
    there exists a rule in $\mathcal{B}$ such that its conclusion is $e$ 
    and its states are all accessible to the adversary.
\end{theorem}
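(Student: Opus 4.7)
The plan is to bootstrap on Theorem~\ref{thm:base}, which already equates derivability from $\mathcal{B}_{init}$ with derivability from $\mathcal{B}$. What remains is to bridge the gap between \emph{derivability from $\mathcal{B}$} and the \emph{syntactic condition} checked by Algorithm~\ref{alg:search}. The key observation is that every rule in $\mathcal{B}$ (after the filtering step in Algorithm~\ref{alg:base}) has predicates drawn exclusively from $\mathcal{N}$ (events and singletons) and has an event as its conclusion. Events are discharged by asking the protocol to engage the corresponding session, singletons are discharged by having the adversary generate a fresh name, so the only non-trivial side condition for firing such a rule is that its remaining states can be consistently instantiated and correspond to objects the adversary actually controls.

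For the \emph{if} direction, suppose $R = f_1,\ldots,f_n : M \sqle S : O \sqra e$ lies in $\mathcal{B}$, and let $\sigma$ be a substitution with $\sigma s = \sigma s'$ whenever $s \sim s'$ and with every $\sigma s$ accessible. I would form the closed instance $\sigma R$, fire the event predicates through fresh protocol sessions, and supply the singletons with fresh adversary-generated names. Each accessibility pattern witnesses that the required object is under the attacker's observation, so the closed derivation is realizable; by Theorem~\ref{thm:base}, $\sigma e$ is then derivable from $\mathcal{B}_{init}$.

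For the \emph{only if} direction, assume $e$ is derivable from $\mathcal{B}_{init}$. By Theorem~\ref{thm:base}, there is a derivation tree whose root is labelled by some $R \in \mathcal{B}$ and whose leaves carry only facts in $\mathcal{N}$. Using the indexing on the tree edges introduced in Definition~\ref{def:derivation}, edges sharing an index share a state, and states on edges of different indices are linked by the pre/post-state conversions of the state transferring nodes. From this shared-index structure I would extract a substitution $\sigma$ that consistently instantiates every variable of $R$; by construction, whenever two states of $S$ belong to the same $\sim$-partition, their roles in the tree force $\sigma$ to unify them. Accessibility then follows because each state actually manipulated by the attacker in the derivation is, by assumption on the protocol specification, an instance of one of the accessibility patterns.

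The main obstacle is the argument just sketched for the \emph{only if} direction, namely that the tree-derived substitution $\sigma$ simultaneously (i)~identifies every pair of $\sim$-equivalent states in $R$'s state set and (ii)~maps each resulting state to an accessible pattern. Part (i) requires a careful bookkeeping of how Case~2 of Algorithm~\ref{alg:base} merges exactly those equivalent states that appear as the latest states for a given transferring step, so that two $\sim$-equivalent states left distinct in $R$ must in fact end up on edges of a common index in the tree. Part (ii) amounts to observing that no derivation tree constructed from $\mathcal{B}_{init}$ can introduce a state outside of the declared accessibility patterns, since every state appearing in a rule of $\mathcal{B}_{init}$ is by assumption part of the specification the attacker is given access to. Once (i) and (ii) are established, the two directions fit together cleanly and the theorem follows.
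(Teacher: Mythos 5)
Your proof takes essentially the same route as the paper: both directions reduce to Theorem~\ref{thm:base}, with the \emph{if} direction firing the rule by discharging its event and singleton predicates and invoking accessibility of the unified states, and the \emph{only if} direction reading the required substitution and accessibility off the derivation tree. Your treatment of the \emph{only if} direction is in fact considerably more explicit than the paper's own, which disposes of the accessibility claim in a single sentence.
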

\begin{proof}
    \textbf{(If - Partial Soundness)}
    If there is a rule in $\mathcal{B}$ that outputs $e$. 
    As the rules' predicates are events and singletons, 
    the adversary can ask the protocol to engage those events and generate new names to fulfill the singletons. 
    When its states in the same partition are unifiable and all unified states are accessible to the adversary, 
    the adversary can have the objects to meet the requirements of those states. 
    Hence, $e$ is derivable by the rule. 
    According to Theorem~\ref{alg:base}'s if condition, 
    $e$ is also derivable from $\mathcal{B}_{init}$. 
    
    \textbf{(Only if - Partial Completeness)}
    If the event $e$ is derivable from $\mathcal{B}_{init}$, 
    according to Theorem~\ref{thm:base}'s only if condition, $e$ is also derivable from $\mathcal{B}$. 
    As the derivation tree is valid, 
    the initial states should be accessible states for the adversary. 
\end{proof}



\section{Experiments} 
\label{sec:experiments}

Our engineering efforts has realized the proposed approach in a tool named SSPA (Stateful Security Protocol Analyzer). 
Our tool, all protocol models and evaluation results are available online at~\cite{LPD14}. 
SSPA is implemented in C++ with around 11K LOC. 
The experiments presented in this section are evaluated 
with Mac OS X 10.9.1, 2.3 GHz Intel Core i5 and 16G 1333MHz DDR3. 

We have tested our tool with three versions of the DEP~\cite{AR10,DKRS11}, 
the Bitlocker protocol~\cite{BL11} and two versions of the Needham-Schroeder Public Key Protocol (NSPK)~\cite{NS78,Low95}. 
All of the protocols are correctly analyzed within 30 minutes. 
The results are summarized in Table~\ref{tab:experiments}. 

\begin{table}[t]
    \begin{center}
    \begin{tabular}{l | c l r}
        \hline \hline
        ~Protocol   & $\sharp$Rules\footnote{The number of rules generated 
            by our solving algorithm for each protocol.} & Result & Time \\
        \hline
        DEP (w.o. reboot)~\cite{AR10}     & 318       & Secure    & 6.2s \\
        DEP (w. reboot)~\cite{AR10}      & 1409      & Attack    & 12m 9.5s \\
        Modified DEP~\cite{DKRS11}      & 1378      & Secure    & 22m 17.7s \\
        \hline
        Bitlocker~\cite{BL11}          & 24        & Secure    & 3ms \\
        \hline
        NSPK~\cite{NS78}               & 101       & Attack    & 47ms \\
        NSPK (Lowe)~\cite{Low95}        & 78        & Secure    & 24ms \\
        \hline
    \end{tabular}
    \end{center}
    \caption{Experiment results}
    \label{tab:experiments}
\end{table}

For the DEP example, when the TPM reboot is disallowed, the verification result shows that 
\emph{Bob} cannot obtain both of the secret and the proof for his revoke action at the same time. 
In the meanwhile, we also found several valid traces for \emph{Bob} to finish the protocol 
by either opening the envelope or revoking his right. 
However, when the TPM reboot is allowed, 
the claimed security property of the DEP is not preserved. 
In addition to the attack trace described in Section~\ref{sec:example}, 
SSPA also found several other traces (attacking at different states), 
which are similar variants to the attack described in Section~\ref{sec:example}. 
The modified version of the DEP presented in~\cite{DKRS11} is also proven to be secure in our framework. 

The Bitlocker~\cite{BL11} designed by Microsoft also uses TPM to protect its execution state. 
In the machine equipped with Bitlocker, 
the hard drive is assumed to be encrypted under a volume encryption key (VEK). 
The VEK is in turn encrypted by a volume master key (VMK). 
When the machine is booted, 
an immutable pre-BIOS will load the BIOS and extend the hash value of the BIOS into the TPM. 
The pre-BIOS then passes the control to the BIOS. 
Later, the BIOS can load other components by first extending the hash value of that component into the TPM. 
The components then could in turn load other components by doing this repeatedly, resulting in a trust chain. 
Initially, the VMK is sealed by the TPM to a certain PCR value 
corresponding to a correct boot state of the machine. 
When the correct state is reached, the VMK can be unsealed to decrypt the hard drive and access its data. 
Even though the attacker could replace the BIOS and other components in the machine, 
their hash values will not be the same as the original ones. 
So the correct state cannot be reached and the VMK remains secure. 
We model the protocol by assuming that the attacker can read the VMK 
by either replacing a fake BIOS or a fake loader (a component) in the machine. 
Otherwise, the attacker cannot access the unsealed data from the machine even if it is unsealed 
as it is controlled by a trusted component. 
The verification result shows that Bitlocker protects the VMK from the attacker 
even when the BIOS and the loader can be replaced. 

Lastly, we modeled the Needham-Schroeder Public Key (NSPK) Protocol~\cite{NS78} 
and its fixed version by Gavin Lowe~\cite{Low95}. 
We use these two examples to show that our approach also works for stateless protocols. 
In order to model the nonces exchanged by the participants in NSPK as random numbers, 
we add two states for the participants when their first message is sent 
and they are waiting for the second message by treating them as trusted parties. 


\section{Discussions} 
\label{sec:discussions}

In this paper, we have presented a new approach
for the stateful security protocol verification.
Different from existing tools in the literature,
our approach allows for specifying stateful protocols directly (without modifications to the protocols)
and it can deal with infinite protocol states. 
Moreover, our verification procedure is sound and complete if the solving algorithm terminates.
We have implemented a tool for our new approach
and validated it on a number of protocols. 
So far, the initial results are encouraging. 

When rules are newly composed in the knowledge base, 
the redundancy checking consumes a large amount of time. 
This is mainly because of the complexity of pairing states and predicates from different rules 
and finding all possible substitutions according to Definition~\ref{def:implication}. 
For the future work, accelerating the redundancy checking would be very helpful 
to accelerate the verification process dramatically. 
In addition, analyzing more stateful protocols would be very interesting. 
Moreover, adapting our approach to verify stateful protocols 
with physical properties involved, e.g., time, space, etc. would be promising as well. 


\bibliographystyle{IEEEtran}
\bibliography{paper}

\end{document}